\newcommand{\disc}{\boldsymbol\eta}
\newcommand{\x}{\boldsymbol\theta}
\newcommand{\control}{\textbf{x}}
\newcommand{\f}{\textbf{f}}
\newcommand{\err}{\textbf{e}}
\newcommand{\obs}{\textbf{z}}
\newcommand{\sys}{\textbf{y}}
\newcommand{\best}{\boldsymbol\theta^{*}}
\newcommand{\ens}{\textbf{F}}
\newcommand{\ensc}{\textbf{F}_{\boldsymbol{\mu}}}
\newcommand{\bas}{\boldsymbol\Gamma}
\newcommand{\bass}{\boldsymbol\gamma}
\newcommand{\resid}{\boldsymbol\epsilon}
\newcommand{\rot}{\boldsymbol\Lambda}
\newcommand{\rott}{\boldsymbol\lambda}
\newcommand{\var}{\boldsymbol\Sigma}
\newcommand{\weight}{\textbf{W}}
\newcommand{\B}{\textbf{B}}
\newcommand{\rr}{\textbf{r}}
\newcommand{\cc}{\textbf{c}}
\newcommand{\bb}{\textbf{b}}
\newcommand{\R}{\mathcal{R}}
\newcommand{\V}{\mathcal{V}}
\newcommand{\ibound}{T}
\newcommand{\bassf}{\boldsymbol\varphi}
\newcommand{\resens}{\textbf{F}_{\epsilon}}
\newcommand{\resbas}{\B_{\epsilon}}
\newcommand{\tspace}{\Theta}
\newtheorem{result}{Result}
\newtheorem{theorem}{Theorem}
\begin{document}

\title{\bf Uncertainty quantification for computer models with spatial output using calibration-optimal bases}
	\author{James M. Salter\thanks{
			The authors gratefully acknowledge \textit{support from EPSRC fellowship No. EP/K019112/1 and support from the NSERC funded Canadian Network for Regional Climate and Weather Processes (CNRCWP). We would also like to thank Yanjun Jiao for managing our ensembles of CanAM4.}}\hspace{.2cm}\\
			Department of Mathematics, College of Engineering, Mathematics and \\ Physical Sciences, University of Exeter, UK.\\
			and \\
			Daniel B. Williamson \\
			Department of Mathematics, College of Engineering, Mathematics and \\ Physical Sciences, University of Exeter, UK.\\
			and \\
			John Scinocca \\
			Canadian Centre for Climate Modelling and Analysis \\ Victoria, Canada.\\
			and\\
			Viatcheslav Kharin \\
			Canadian Centre for Climate Modelling and Analysis \\ Victoria, Canada.
		}
\maketitle

\begin{abstract}
The calibration of complex computer codes using uncertainty quantification (UQ) methods is a rich area of statistical methodological development. When applying these techniques to simulators with spatial output, it is now standard to use principal component decomposition to reduce the dimensions of the outputs in order to allow Gaussian process emulators to predict the output for calibration. We introduce the `terminal case', in which the model cannot reproduce observations to within model discrepancy, and for which standard calibration methods in UQ fail to give sensible results. We show that even when there is no such issue with the model, the standard decomposition on the outputs can and usually does lead to a terminal case analysis. We present a simple test to allow a practitioner to establish whether their experiment will result in a terminal case analysis, and a methodology for defining calibration-optimal bases that avoid this whenever it is not inevitable. We present the optimal rotation algorithm for doing this, and demonstrate its efficacy for an idealised example for which the usual principal component methods fail. We apply these ideas to the CanAM4 model to demonstrate the terminal case issue arising for climate models. We discuss climate model tuning and the estimation of model discrepancy within this context, and show how the optimal rotation algorithm can be used in developing practical climate model tuning tools.
\end{abstract}

\section{Introduction}
\label{sec:intro}

The design and analysis of computer experiments, now part of a wider cross-disciplinary endeavour called `Uncertainty Quantification' or `UQ', has a rich history in statistical methodological development as far back as the landmark paper by \cite{sacks1989design}. The calibration of computer simulators, a term reserved for methods that locate simulator input values with outputs that are consistent with physical observations (the inverse problem), is a well studied problem in statistical science, with Kennedy and O'Hagan's Bayesian approach based on Gaussian processes the most widely used \citep{kennedy2001bayesian}. 

The essence of the statistical approach to calibration is to combine a formal statistical model relating the computer simulator to real-world processes for which we have partial observations \citep{kennedy2001bayesian, goldstein2009reified, williamson2013history}, with a statistical representation of the relationship between inputs and outputs of the simulator based, typically, on Gaussian processes \citep{haylock1996inference}. 

Extensions for computer simulators with spatio-temporal output have centred around projecting the output onto a basis and adapting calibration methods to the lower-dimensional projections of these fields. Though wavelets \citep{bayarri2007computer} and B-splines \citep{williamson2012fast} have been tried, the approach due to \cite{higdon2008computer}, based on the principal components of the simulator output, has become the default method. Statistical methodological developments in UQ have built on principal component methods (e.g. \citet{wilkinson2010bayesian, chang2014probabilistic, chang2016calibrating}), and they have seen wide application, particularly in the analysis of climate models \citep{sexton2011multivariate, chang2014probabilistic, pollard2016}.

What statisticians term calibration is referred to as `tuning' in the climate modelling community, a process that has a huge influence on the projections made by each modelling centre and by the Intergovernmental Panel on Climate Change \citep{ipcc}. Each modelling centre submits integrations of their climate model for 4 different forcing scenarios (known as Representative Concentration Pathways) to each phase of the Coupled Model Intercomparison Project \citep{meehl2000coupled}, with the input parameters of the model `tuned' prior to submission so that the model output compares favourably with certain key observations. The resulting integrations, and not the simulators themselves, are what most climate scientists call `climate models' (i.e. simulators are not considered to be functions of these now fixed parameters). These integrations are used to discover physical mechanisms \citep{scaifeetal12}, projected trends \citep{screenwilliamson17}, drivers of variability \citep{collinsetal10} and future uncertainty to aid policy making \citep{harrisetal06}. 

Despite the application of UQ methods to the calibration of `previous-generation' climate models, referred to in the papers above and many others, UQ is not used for tuning within any of the major climate modelling centres \citep{hourdinetal16}. Instead, climate model parameters are often explored individually and tuning done by hand and eye, with the parameters changed, and the new run either accepted or rejected based on heuristic comparison with the current `best' integration. Different descriptions of these processes are offered by \cite{mauritsen2012tuning, williamson2017tuning, hourdinetal16}. 

This lack of uptake of state-of-the-art statistical methodology for calibration amongst some of the world's most important computer simulators should give us pause for thought. The `off-the-shelf' methodology, Bayesian calibration with principal components, is widely used elsewhere, well published, and is applied to many lower resolution climate models within the climate science literature. Is the lack of uptake a communication issue, or are there features of our methodology that mean it doesn't scale up well to climate simulators?

In this paper we show how the terminal case, wherein a simulator cannot be satisfactorily calibrated, manifests in the inference of standard UQ methodologies. We then demonstrate that even when there is a good solution to the inverse problem, the use of standard basis representations of spatial output (e.g. principal components across the design) can and regularly do lead to the terminal case and incorrect inference. We develop a simple test to see whether an analysis will lead to the terminal case before performing the calibration and, when the terminal case is not guaranteed, provide a methodology for finding an optimal basis for calibration, via a basis rotation. The efficacy of our methodology is demonstrated through application to an idealised example, and its relevance to climate model tuning through application to the calibration of the atmosphere of the current Canadian climate model, CanAM4.

In Section 2, we review UQ methodologies for calibration and present the terminal case for scalar model output. Section 3 reviews the standard approach to handling spatial output and demonstrates the implications of the terminal case for these methods through an idealised example. Section \ref{sectionbasis} presents novel methods for finding optimal bases for calibration that overcome the terminal case issues and demonstrates the efficacy of calibrating with optimal bases for our example. In Section 5 we see that standard approaches always lead to terminal analyses in CanAM4, and show how our optimal basis methodology can be used in the process of climate model tuning. Section 6 contains discussion.

\section{Calibration methodologies and the terminal case}
We consider a computer simulator to be a vector-valued function $f(\x, \control)$, with input parameters $\x$ that we wish to estimate/constrain, and `control' or `forcing' parameters, $\control$, both of which can be altered to perform computer experiments. For example, $\control$ might represent future CO$_2$ concentrations in a climate model. $f(\cdot,\control)$ simulates a physical system $\sys(\control)$, and we have access to measurements or observations $\obs$, of part or all of $\sys$. The goal of calibration methods is to use $\obs$ to learn about $\x$. In what follows we remove the control parameters, $\control$, to simplify the notation, as they are not involved in calibration, but in subsequent prediction. 

The two statistical methodologies for calibration that we focus on here are Bayesian (or probabilistic) calibration \citep{kennedy2001bayesian, higdon2008computer}, and history matching with iterative refocussing \citep{craig1996bayes, vernon2010galaxy, williamson2017tuning}. Both begin with the same type of assumption, namely that there exists a best input setting, $\best$, so that
\begin{equation}\label{realitymodel}
\sys = f(\best) + \disc, \qquad \obs = \sys + \err
\end{equation}
for mean-zero independent observation errors, $\err$, and model discrepancy, $\disc$ (though history matching differs in only requiring uncorrelated terms in (\ref{realitymodel}) rather than independent terms). 

Both methods require an emulator, usually a Gaussian process representation of function $f(\x)$, trained using runs $\ens = (f(\x_{1}),...,f(\x_{n}))$ based on design $\textbf{X} = (\x_1,\ldots,\x_n)$. For scalar $f(\cdot)$, the general model is
\begin{equation}\label{emulator}
f(\x) | \boldsymbol{\beta}, \boldsymbol{\phi} \sim \mathrm{GP}\left(\boldsymbol{\beta}^Tg(\x), R(|\x-\x'|;\boldsymbol{\phi})\right),
\end{equation}
where $g(\x)$ is a vector of specified regressors, $\boldsymbol{\beta}$ their coefficients, and $R(|\x-\x'|;\boldsymbol{\phi})$ a weakly stationary covariance function with parameters $\boldsymbol{\phi}$. The model is completed by specifying a prior on the parameters, $\pi(\boldsymbol{\beta}, \boldsymbol{\phi})$, and posterior inference given $\ens$ follows naturally with
	\begin{displaymath}
	f(\x)|\ens, \boldsymbol{\beta}, \boldsymbol{\phi} \sim \mathrm{GP}(m^*(\x), R^*(\cdot,\cdot; \boldsymbol{\phi}))
	\end{displaymath}
	with
	\begin{displaymath}
	\begin{split}
	m^*(\x) = \boldsymbol{\beta}^Tg(\x) + \textbf{K}(\x)\textbf{V}^{-1}\left(\ens -  \boldsymbol{\beta}^Tg(\textbf{X})\right), &\quad \textbf{K}(\x) = R(\x, \textbf{X}; \boldsymbol{\phi}),  \\
R^*(\x, \x'; \boldsymbol{\phi}) = R(\x, \x'; \boldsymbol{\phi}) - \textbf{K}(\x)\textbf{V}^{-1}\textbf{K}(\x')^T, &\quad \textbf{V} = R(\textbf{X}, \textbf{X}; \boldsymbol{\phi}).
	\end{split}
	\end{displaymath}
	There are many variants on emulation, with some practitioners preferring no regressors \citep{chen2016analysis}, different types of correlation function (including no correlation) \citep{kaufman2011efficient, salter2016comparison}, and different priors, $\pi(\boldsymbol{\beta}, \boldsymbol{\phi})$, with some leading to partially analytic posterior inference \citep{haylock1996inference}. As history matching only requires posterior means and variances of the emulator, Bayes linear analogues are sometimes used \citep{vernon2010galaxy}. Generalisations to multivariate Gaussian processes are natural \citep{conti2010bayesian}, and we address the difficulty with high dimensional output from Section \ref{spatial} onwards.
\subsection{Probabilistic calibration}	
Though the underlying statistical model and the emulator are similar for both history matching and probabilistic calibration, the assumptions placed upon $\best$, and the resulting inference, are quite different. Probabilistic calibration places a prior on $\best$, $\pi(\best)$, and a Gaussian process prior for the discrepancy, $\disc \sim \mathrm{GP}(0,\var_{\disc})$, before deriving the posterior $\pi(\best, \disc | \ens, \obs)$, and marginalising for $\best$. The discussion of \cite{kennedy2001bayesian}, and the later paper by \cite{brynjarsdottir2014learning}, argue that lack of identifiability between $\best$ and $\disc$ mean that strong prior information on $\disc$ or $\best$ is essential for effective probabilistic calibration to be possible.

\subsection{History matching and iterative refocussing}
Note that, given a discrepancy variance, probabilistic calibration must still give a posterior $\pi(\best | \ens, \obs)$ that integrates to $1$, thus predetermining an analysis that will point to some region of parameter space $\tspace$ as being `most likely'. This can be undesirable in some application areas, as often the goal is to find out if the simulator \textit{can} get `close enough' to the observations, so that experiments predicting the future can be trusted. Climate model tuning is a good example of this, where part of the goal in tuning is to find out whether it is the choice of parameters, or the parameterisation itself, that is leading model bias \citep{mauritsen2012tuning, hourdinetal16}. 

The method of history matching and iterative refocussing allows the question of whether the model is fit for purpose to be answered as part of the calibration exercise, by altering the problem from one of looking for the best input directly, to one of trying to rule out regions of $\tspace$ that could not contain $\best$. A model unfit for purpose would have all of $\tspace$ ruled out. The method defines an implausibility measure, $\mathcal{I}(\x)$, with
\begin{equation} \label{mvimpl}
\mathcal{I}(\x) = (\obs - \text{E}[f(\x)])^{T}(\text{Var}(\obs - \text{E}[f(\x)]))^{-1} (\obs - \text{E}[f(\x)]),
\end{equation}
where the expectations and variances of $f(\x)$ are derived from the Gaussian process emulator description above, and are conditioned on the runs $\ens$. If $\mathcal{I}(\x)$ exceeds a threshold, $\ibound$, that value of $\x$ is considered implausible and ruled out, thus defining a membership function for a subspace $\tspace'$ of $\tspace$ that is Not Ruled Out Yet (NROY), with $\tspace' = \left\{\x\in\tspace : \mathcal{I}(\x)\leq T \right\}.$ The choice of $\ibound$ will be problem dependent, though typically, if $\obs$ is one-dimensional, Pukelsheim's three sigma rule \citep{pukelsheim1994three} is used to set $\ibound = 9$ \citep{craig1996bayes, williamson2015bias}. For $\ell$-dimensional $\obs$, \citet{vernon2010galaxy} define $\ibound = \chi^2_{\ell, 0.995}$, the $99.5$th percentile of the $\chi^2$-distribution with $\ell$ degrees of freedom, or a conservative $T$ can be derived through Chebysev's inequality.

A key principle behind history matching is its iterative nature. Following an initial set of runs, a `wave' of history matching is conducted, leading to a certain percentage of $\tspace$ being ruled out. A new wave can then be designed within NROY space, and the procedure repeated, refocussing the search for possible $\best$ \citep{williamson2017tuning}.

Discrepancy and observation error variances, $\var_{\disc}$ and $\var_{\err}$, are important in both probabilistic calibration and history matching. For the latter, equation (\ref{realitymodel}) leads to
\begin{displaymath}
\text{Var}(\obs - \text{E}[f(\x)]) = \text{Var}[f(\x)] + \var_{\disc} + \var_{\err}
\end{displaymath}
in equation \eqref{mvimpl}, whilst a Normal assumption on $\err$ in calibration means $\var_{\disc}$ and $\var_{\err}$ appear in the likelihood.

In this paper, we focus on optimal spatial calibration for both types of methodology, as the issues we shall identify in Section \ref{spatial} apply equally to both, though manifest in different ways, as we shall illustrate now with our discussion of the terminal case.
\subsection{The terminal case} \label{terminal}

Consider a computer simulator, $f(\x)$, a discrepancy variance assessment $\var_{\disc}$, and an observation error variance $\var_{\err}$, where both variance matrices are positive definite. We define the terminal case to occur when $\mathcal{I}(\x) > T$, for $T$ as above and for a perfect emulator, so that, in equation \eqref{mvimpl}, $\text{E}[f(\x)]=f(\x)$ and $\text{Var}[f(\x)]=0$ for all $\x$. So, from a history matching perspective, the terminal case occurs when the model is too far from the observations at every point in parameter space according to the model discrepancy. Hence, all of $\tspace$ is ruled out, and the modellers must reconsider their simulator, or their error tolerance.


Within a probabilistic calibration framework, the terminal case implies a prior-data conflict so that, in some sense, $\var_{\disc}$ has been `misspecified' or the expert is `wrong'. Lack of identifiability requires informative expert judgement for discrepancy \citep{brynjarsdottir2014learning}, yet the difficulty in providing such judgements for complex computer simulators \citep{goldstein2009reified} may mean that the terminal case would occur quite often in practice. It is therefore important to see how such prior-data conflict would manifest. 
\begin{figure}[h!]
	\centering
	\includegraphics[width = 1\textwidth, height=0.69\textheight]{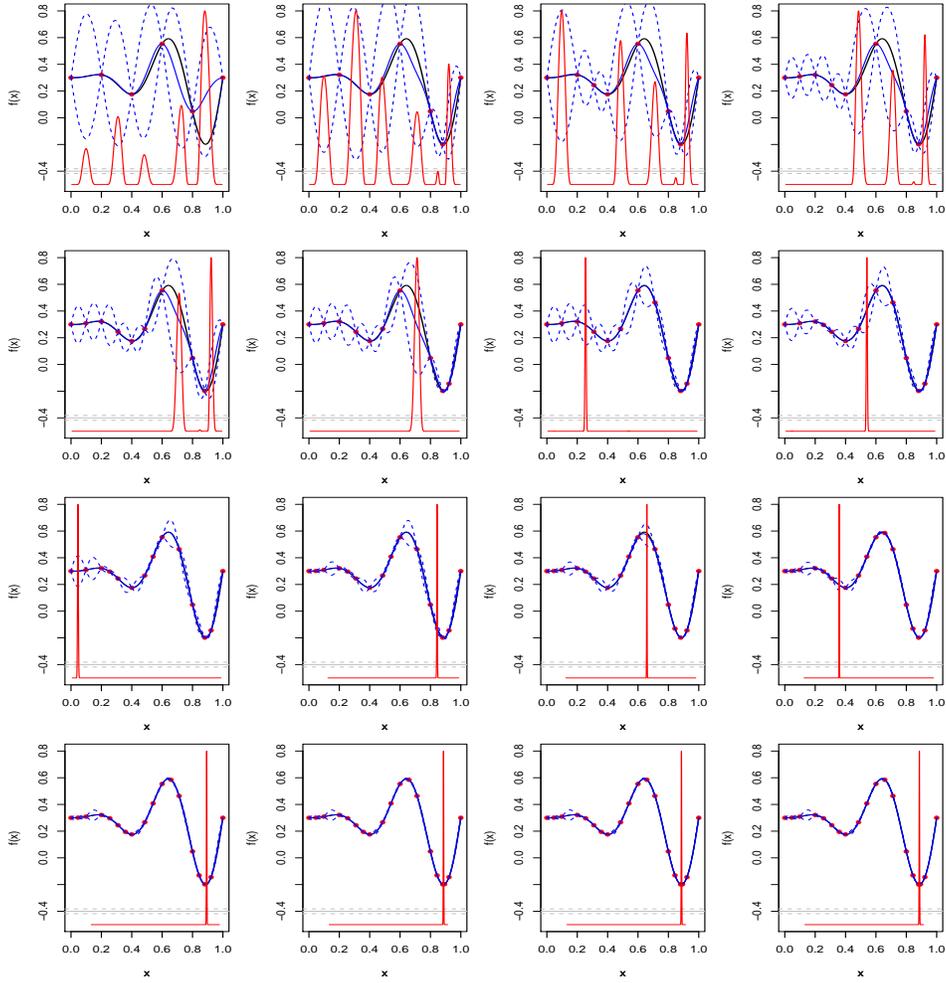}
	\caption{Showing 20 steps of an iterative probabilistic calibration of a computer simulator (black line) to observations (solid grey line) with $\var_{\disc}$ and $\var_{\err}$ misspecified (dashed grey lines $\pm 3$ standard deviations). Observations of the model (red dots) iteratively taken at the MAP estimate for $\best$ following the fitting of a GP emulator (mean solid blue line, $\pm 2$ standard deviations dashed blue lines), and the posterior distribution of $\best$ overlaid at each step (solid red line).}\label{whackamole}
\end{figure}

Figure \ref{whackamole} shows 20 steps of an iterative probabilistic calibration of a 1d $f(\x)$ that we can evaluate quickly (black line), to observations (solid grey line), with $\var_{\disc}$ and $\var_{\err}$ misspecified (dashed grey lines $\pm 3$ standard deviations) so that the true function does not come as close to the observations as the expert judgement indicates. Starting with an equally spaced 6 point design, a Gaussian process emulator is fitted for fixed correlation length (the mean function is the solid blue line, 2 standard deviation intervals are given by the blue dashed lines), and the posterior distribution $\pi(\best\mid\obs,\ens)$ overlaid (solid red line). We then evaluate $f(\x)$ at the maximum a posteriori estimate for $\best$, refit our Gaussian process, and compute the new posterior over $\best$ to produce the next plot. 

From panel 6 onwards, we see the issue with the terminal case for probabilistic calibration. Our posterior beliefs are highly peaked at one particular $\x$ value, yet evaluating the model there completely shifts the peak to a location for which we had near zero prior density. Each evaluation of the simulator, which for climate models may take weeks or months, shifts the posterior spike to an unexpected (a priori) part of parameter space. It is often not efficient to run expensive simulators, such as climate models, that require expert time to run and manage, one run at a time \citep{williamson2015exploratory}. The scientists that manage jobs on supercomputers, for example, require batches of runs that can be run in parallel. However, batch designs could be even worse here. Guided by the posterior density at each point, batch designs would be the near equivalent of one point at the MAP estimate, simply shifting the peak of the posterior to somewhere as yet unsampled.

Eventually, as we see from the bottom 4 panels, posterior uncertainty in $f(\x)$ is sufficiently reduced, and $\pi(\best\mid\ens, \obs)$ settles on the `least bad' value of $\x$, where $f(\x)$ is closest to the observations (though around 30 standard deviations away). For simulators with input spaces of much higher dimensions (the climate models we work with have typically specified 10-30 parameters to focus on, though these would be a subset of several hundred), we are unlikely to ever be able to reduce emulator uncertainty to the extent that the posterior spike settles over the least bad parameter setting. Hence, under an iterative procedure such as this, we would continue to chase the best input throughout parameter space, constantly moving the spike as in a game of whack-a-mole, until we run out of resources. 

Our illustration of the terminal case shows that though careful subjective prior information is required for model discrepancy in order to overcome the identifiability issues with the calibration model, if those judgements lead to a prior-data conflict via a terminal case, good calibration will not be possible, and it will take a great deal of resource (enough data to build a near perfect emulator everywhere) to discover this. It would seem more natural to first history match in order to check we are not in a terminal case, and, if not, perform a probabilistic calibration within NROY space as in \cite{salter2016comparison}.


Whichever calibration method, or combination of them, is preferred, it is important to understand this terminal case, as we shall show that even for models that can reproduce observations exactly, tractable methods for calibrating high dimensional output can result in a terminal case analysis.

\section{Calibration with spatial output}\label{spatial}

For spatial fields, the most common approach to emulation and calibration involves projecting the model output onto a low-dimensional basis, $\bas$, and emulating the coefficients, so that fewer emulators are required \citep{bayarri2007computer, higdon2008computer, wilkinson2010bayesian, sexton2011multivariate} (although alternatives, such as emulating every grid box individually, have been applied, e.g. by \citet{gu2016parallel}).
\par
Writing the model output $f(\x_i)$ as a vector of length $\ell$, so that $\ens$ has dimension $\ell \times n$, the singular value decomposition (SVD) is used to give $n$ eigenvectors that can be used as basis vectors (equivalently, finding the principal components) \citep{higdon2008computer, wilkinson2010bayesian, sexton2011multivariate, chang2014probabilistic, chang2016calibrating}. For the size of model output typically explored using these methods, $\bas$ will not be of full rank as $n << \ell$. This means that while $\ens$ can be represented exactly by projection onto $\bas$, general $\ell$-dimensional fields will not have a perfect representation on $\bas$. As the majority of the variability in $\ens$ is usually explained by only the first few eigenvectors, the basis is truncated after $q$ vectors, giving a basis $\bas_q = (\bass_1, \ldots, \bass_q)$ of dimension $\ell \times q$, often chosen so that more than 95\% of $\ens$ is explained by $\bas_q$. Various rules-of-thumb are used dependent on the problem, e.g. \citet{higdon2008computer} truncate after 99\%, while \citet{chang2014probabilistic} use 90\%.
\par
In order to emulate the model, the runs are first centred by subtracting their mean, $\boldsymbol{\mu}$, from each  column of $\ens$, giving the centred ensemble $\ensc$ (we use the term ensemble to mean the collection of runs, as is common in the study of climate models). $\ensc$ is then projected onto the basis $\bas_q$, giving $q$ coefficients associated with each parameter choice:
\begin{equation} \label{projecteqn}
\textbf{c}(\x_{i}) = (\bas_q^{T}\bas_q)^{-1} \bas_q^{T} (f(\x_{i})  - \boldsymbol{\mu}).
\end{equation}
Given $q$ coefficients, a field of size $\ell$ is reconstructed via
\begin{equation} \label{reconeqn}
f(\x_{i}) = \bas_q \textbf{c}(\x_{i}) + \boldsymbol{\mu} + \boldsymbol{\epsilon},
\end{equation}
with $\boldsymbol{\epsilon} = \textbf{0}$ for $\x_i \in \textbf{X}$. Emulators for the coefficients of the first $q$ SVD basis vectors are then built:
\begin{equation} \label{basisemulators}
c_i(\x) \sim \text{GP}(m^*_i(\x), R^*_i(\x, \x; \boldsymbol{\phi})), \quad i = 1, \ldots, q. 
\end{equation}
Given these emulators, calibration can either be performed using the entire $\ell$-dimensional output, with emulator expectations and variances transformed to the $\ell$-dimensional space of the original field \citep{wilkinson2010bayesian}, or on its $q$-dimensional basis representation, with the observations projected onto this basis \citep{higdon2008computer}. 
\par
Calibration (via either history matching or probabilistic calibration) requires an informative prior process model for the spatial discrepancy, $\disc$. This could be a stationary process defined through a simple covariance function over the output dimensions, though a richer class of non-stationary process defined via kernel convolution is often used \citep{higdon1998process, chang2014probabilistic, chang2016calibrating}. These approaches specify a number of knots over the spatial field and define discrepancy to be a mixture of kernels around each of these knots. As with any calibration problem, however, strong prior information for discrepancy processes is essential to overcome identifiability issues, as discussed in Section \ref{terminal}. The way to include this information has been to fix the correlation parameters of the kernels and to have an infomative prior for their variances. With such a prior, a terminal case analysis is just as possible as for the 1D example we presented earlier.
\par
Suppose that the prior on the process is strong enough to overcome identifiability issues and is such that \textit{we don't have a terminal case}. When using a basis emulator to calibrate $f(\cdot)$, we may artificially induce a terminal case analysis, as reconstructions from coefficients on the basis are restricted to a $q$-dimensional subspace of $\ell$-dimensional space. Further, it will not be clear whether our analysis implies that the model is incapable of reproducing $\obs$, or that this was due to a poor basis choice. The SVD basis chooses the $q$-dimensional subspace that explains the maximum amount of variability in $\ens$ with the fewest number of basis vectors. This choice does not guarantee that important directions in $\ens$ that are consistent with $\obs$ are preserved.

\subsection{Illustrative example} \label{examplesection}
We illustrate this problem with an idealised example of a 6 parameter function $f(\x)$ (detailed in Section S1), with output given over a $10 \times 10$ grid. Observations, $\obs$, are given by a known input parameter setting, $f(\best)$, with $\mathrm{N}(0,\var_{\err})$ observation error added (given in (S2), with $\var_{\disc}$ defined in (S3)), so that a calibration exercise should be able to identify $\best$. In our example, the great majority of the input space leads to output that is biased away from $\obs$: the proportion of input space leading to output consistent with $\obs$ is around 0.01\%.
\par
The first panel of Figure \ref{toyobsmeanrecon} shows the observations, $\obs$, with a strong signal on the main diagonal. The second panel is the mean of the output field over a maximin Latin hypercube sample of size 60 in $\tspace$ (i.e. the mean of ensemble $\ens$). The strong signal in the ensemble is a biased version of $\obs$. In a climate context, this is analogous to the Gulf Stream being observed in the incorrect place in model output.
\begin{figure}[t]
\centering
\includegraphics[width = 0.65\linewidth]{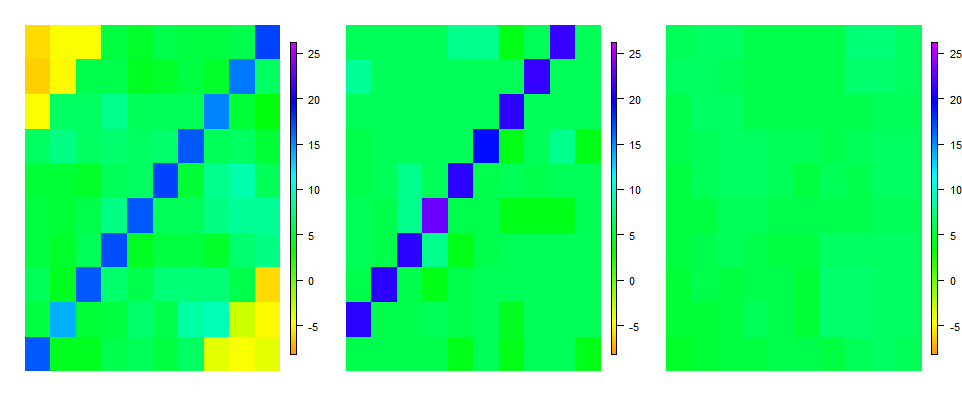}
\caption{Left: the observations, $\obs$, for our function. Centre: the ensemble mean. Right: the reconstruction of $\obs$ using the truncated SVD basis.}
\label{toyobsmeanrecon}
\end{figure}
\par
We calculate the SVD basis $\bas$ as described above. Over 95\% of the ensemble variability is explained by projection onto the first four basis vectors, which we refer to as the `truncated basis', $\bas_4$. If we project $\obs$ onto this basis and reconstruct the original field using these coefficients, via equations \eqref{projecteqn} and \eqref{reconeqn}, we obtain the field given by the third panel of Figure \ref{toyobsmeanrecon}: the distinctive pattern found in $\obs$ has been lost. That is, spatial calibration with $\bas_4$ would ultimately rule out parameter space that contained the true coefficients due to poor reconstruction, suggesting that, for reconstructions of the field using $\bas_4$, we are in the terminal case.
\par
Fitting Gaussian process emulators to the coefficients given by projection of $\ensc$ onto the four basis vectors, the expectation and variance at $\x$ is given by
\begin{align*}
\begin{split}
\mathrm{E}[\textbf{c}(\x)] = (\text{E}[c_1(\x)], \ldots, \text{E}[c_4(\x)])^T, \quad \mathrm{Var}[\textbf{c}(\x)] = diag(\text{Var}[c_1(\x)], \ldots, \text{Var}[c_4(\x)]).
\end{split}
\end{align*}
To probabilistically calibrate or history match on the original field, we require $\mathrm{E}[f(\x)]$ and $\mathrm{Var}[f(\x)]$ in terms of the coefficient emulators. These are
\begin{align*}
\begin{split}
\mathrm{E}[f(\x)] = \bas_q \mathrm{E}[\textbf{c}(\x)], \quad
\mathrm{Var}[f(\x)] = \bas_q \mathrm{Var}[\textbf{c}(\x)] \bas_q^T + \bas_{-q} \boldsymbol{\Sigma}_{-q} \bas_{-q}^T
\end{split}
\end{align*}
where $\bas_{-q}$ contains the discarded basis vectors, and $\boldsymbol{\Sigma}_{-q}$ is a diagonal matrix with the associated eigenvalues as the diagonal elements \citep{wilkinson2010bayesian}. 
\par
Calibrating in the coefficient subspace requires projection of $\obs$, $\var_{\disc}$ and $\var_{\err}$ onto $\bas_4$. For example, the implausibility in (\ref{mvimpl}) on the coefficients becomes 
\begin{equation} \label{coeffimpl}
\tilde{\mathcal{I}}(\x) = (\bas_q^T \obs - \mathrm{E}[\textbf{c}(\x)])^{T}(\mathrm{Var}[\textbf{c}(\x)] + \bas_q^T \var_{\disc} \bas_q + \bas_q^T\var_{\err} \bas_q)^{-1} (\bas_q^T \obs - \mathrm{E}[\textbf{c}(\x)]).
\end{equation}
Using the 0.995 value of the chi-squared distribution with 100 degrees of freedom to history match via \eqref{mvimpl}, we rule out the whole parameter space, $\tspace$, and so we are in the terminal case. Hence probabilistic calibration gives peaked prediction at the incorrect value of $\best$, consistent with the description given in Section \ref{terminal} (see SM section S1.1, Figures S3, S4). 

By history matching on the coefficients instead, using \eqref{coeffimpl}, and setting $\ibound$ using the chi-squared distribution with 4 degrees of freedom, we find an NROY space consisting of 3.8\% of $\tspace$. However, we rule out $58\%$ of the parameter space that was consistent with $\obs$, as the important directions for comparing the model to observations are not contained in $\bas_4$.
\par
\par
Whether we are calibrating on the original field, or on the coefficients, the `best' result we are able to find is that given by the reconstruction of $\obs$ with $\bas_4$, given in the final panel of Figure \ref{toyobsmeanrecon}. On the field, we are in the terminal case. On the coefficients, we are attempting to find runs that give coefficients that lead to this reconstruction, regardless of what happens in the directions we are interested in (i.e. the main diagonal pattern). Henceforth, we choose to focus on calibration on the field, as it compares all aspects of the observed output to the model, rather than a few summaries of it.


\section{Optimal basis selection} \label{sectionbasis}

For calibration, there are two main requirements for a basis, $\B$, representing high dimensional output: being able to represent $\obs$ with $\B$ (a feature not guaranteed by principal component methods), and retaining enough signal in the chosen subspace to enable accurate emulators to be built for the basis coefficients (as principal components do). 
\par
A natural method for satisfying the first goal is to minimise the error given when the observations are reconstructed using $\B$. Define the reconstruction error, $\R_{\weight}(\B, \obs)$ via
\begin{equation} \label{reconerror}
\R_{\weight}(\B, \obs) = \lVert \obs - \textbf{B} (\textbf{B}^{T} \weight^{-1} \textbf{B})^{-1} \textbf{B}^{T} \weight^{-1} \obs \rVert_{\weight}.
\end{equation}
where $\lVert \textbf{v} \rVert_{\weight} = \textbf{v}^{T} \weight^{-1} \textbf{v}$ is the norm of vector $\textbf{v}$,
%
and $\weight$ is an $\ell \times \ell$ positive-definite weight matrix. By setting $\weight = \var_{\err} + \var_{\disc}$, $\R_{\weight}(\B, \obs)$ is analogous to \eqref{mvimpl}, and is the implausibility when we know the basis coefficients exactly (so that the emulator variance is $0$).
\par
As $\weight$ will not generally be a multiple of the identity matrix, the SVD projection from \eqref{projecteqn} is not appropriate for $\R_{\weight}(\cdot, \obs)$. Therefore, \eqref{projecteqn} becomes
\begin{equation}
\textbf{c}(\x_{i}) = (\B^{T} \weight^{-1} \B)^{-1} \B^{T} \weight^{-1} (f(\x_{i})  - \boldsymbol{\mu}),
\end{equation}
with this projection minimising the error in $\lVert \cdot \rVert_{\weight}$ (Section S2), hence the definition of the reconstruction error in \eqref{reconerror}.
\par
We present everything in full generality for positive definite $\weight$. Therefore, $\B$ is an orthonormal basis if $\B^T \weight^{-1} \B = \boldsymbol{\mathbb{I}}_{n}$. A basis with this property can be obtained using generalised SVD \citep{jolliffe2002principal}, with $\weight = \boldsymbol{\mathbb{I}}_{\ell}$ giving the usual SVD decomposition:
\begin{displaymath}
\ensc^T = \textbf{U} \textbf{D} \B^T, \quad \textbf{U}^T \textbf{U} = \boldsymbol{\mathbb{I}}_{n}, \, \B^T \weight^{-1} \B = \boldsymbol{\mathbb{I}}_{n}.
\end{displaymath}
As a measure of whether emulators can be built, we use the proportion of variability explained by projection of the ensemble onto each basis vector $\bb_k$, $\V_k(\B, \ensc)$, with
\begin{equation} \label{onevar}
\V_k(\B, \ensc) = \frac{\sum_{j = 1}^n \lVert \bb_k (\bb_k^T \weight^{-1} \bb_k)^{-1} \bb_k^T \weight^{-1} (f(\x_j) - \boldsymbol{\mu}) \rVert_{\weight} }{\sum_{j = 1}^n \lVert f(\x_j) - \boldsymbol{\mu} \rVert_{\weight}}.
\end{equation}
The proportion of ensemble variability explained by $\B$, $\V(\B, \ensc)$, is
\begin{equation} \label{generalvar}
\V(\B, \ensc) = \frac{\sum_{j = 1}^n \lVert \B (\B^T \weight^{-1} \B)^{-1} \B^T \weight^{-1} (f(\x_j) - \boldsymbol{\mu}) \rVert_{\weight} }{\sum_{j = 1}^n \lVert f(\x_j) - \boldsymbol{\mu} \rVert_{\weight}}.
\end{equation}
The SVD basis maximises $\V_k(\B, \ensc)$ for each $k$, given the previous vectors and subject to orthogonality.
\par
Prior to building emulators and performing calibration for a given basis, we can assess whether we are in the terminal case or not. For history matching threshold $T$, if
$\R_{\weight}(\B, \obs) > \ibound$
then we are in the terminal case on $\B$, and would even rule out values of $\best$ that reproduce $\obs$ exactly. If $\R_{\weight}(\B, \obs) > \ibound$ for some $\{\B, \weight\}$, we may view $\weight$ as having been misspecified, as in the terminal case described in Section \ref{terminal}. However, we may also have under-explored the output dimension of $f(\cdot)$, so that $\B$ does not allow us to get close enough to $\obs$. We revisit this test in the context of optimal basis choice in Section \ref{algorithm_sect}.

\par
Figure \ref{toysvdvarmse} compares $\V(\cdot, \ensc)$ and $\R_{\weight}(\cdot, \obs)$ for the example of Section \ref{spatial}. We refer to plots of this type as VarMSE plots. The red line represents $\R_{\weight}(\B_k, \obs)$, and the blue line shows $\V(\B_k, \ensc)$, for each truncated basis, $\{ \B_k \}^n_{k = 1}$. The vertical dotted line indicates where the basis is truncated if we wish to explain 95\% of the ensemble variability, and the horizontal dotted line represents the history matching bound, $\ibound$. The solid horizontal line is equal to $\R_{\weight}(\B, \obs)$.
\begin{figure}[t]
	\centering
	\includegraphics[width = 0.45\linewidth]{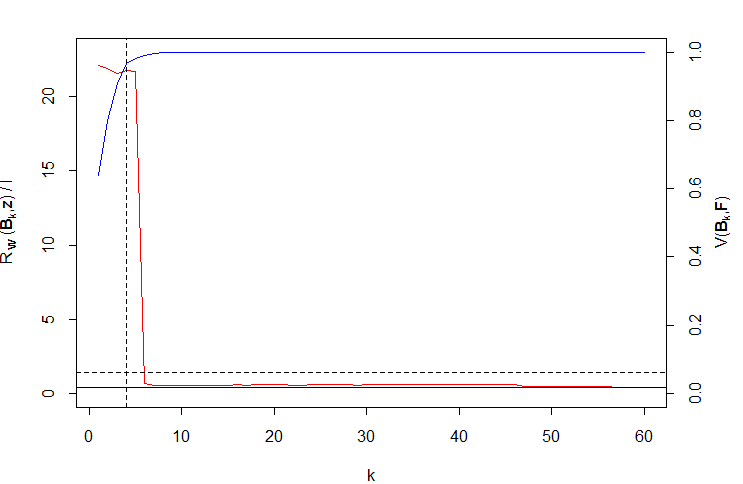}
	\caption{A plot showing how the reconstruction error (red) and proportion of ensemble variability explained (blue) change as the SVD basis is increased in size, for $\weight = \var_{\err} + \var_{\disc}$.}
	\label{toysvdvarmse}
\end{figure}
For the SVD basis in our example, we see that $\R_{\weight}(\cdot, \obs)$ is large (compared to $\ibound$) until $k = 6$, where the error decreases below the threshold, indicating that the sixth basis vector contains patterns that are important for explaining $\obs$. As further basis vectors are added, $\R_{\weight}(\cdot, \obs)$ continues to decrease, suggesting that patterns relevant for representing $\obs$ are in fact included in $\B$ for this example. However, the later basis vectors explain low percentages of the variability in the ensemble, with the low signal to noise ratio of projected coefficients making accurate emulation impossible. If we could emulate the coefficients for the 5th and 6th basis vectors, we would more accurately represent $\obs$, although this rapid decrease in the reconstruction error is a feature of our example, rather than a general property of the SVD basis, and therefore we still truncate at 95\% for illustrative purposes.
\par
The SVD basis aims to maximise the blue line for each basis vector added, whereas, for calibration, we require the red line to be below $T$. The problem of basis selection for calibration is one of trading off these two requirements, reducing $\R_{\weight}(\cdot, \obs)$ while ensuring that each $\V_k(\cdot, \ensc)$ is large enough to enable emulators to be built. Given that the full SVD basis may contain information and patterns that allow the observations to be more accurately represented, the information contained in this basis may be combined in such a way that the resulting basis is suitable for calibration, with important low-order patterns blended with those that explain more of the ensemble variability.

\subsection{Rotating a basis}
Performing a rotation of an ensemble basis $\B$ using an $n \times n$ rotation matrix, $\rot$, rearranges the signal from the ensemble, potentially allowing the new truncated basis to be a better representation of $\obs$. A general $n \times n$ rotation matrix $\rot$ can be defined by composing $n(n-1)/2$ matrices that give a rotation by an angle around each pair of dimensions \citep{murnaghan1962unitary}. Our goal is to find $\rot$ such that $\B \rot$ minimises $\R_{\weight}((\B \rot)_q, \obs)$, subject to constraints on $\V_k(\cdot, \ensc)$ that allow the projected coefficients to be emulated.
\par
To directly define a rotation matrix $\rot$ via optimisation requires a large number of angles to be found, even when the ensemble size is small. Instead, we take an iterative approach, selecting new basis vectors sequentially while minimising $\R_{\weight}(\cdot, \obs)$ at each step, in such a way that guarantees that the resulting basis is an orthogonal rotation of the original basis.

Given $p < n$ basis vectors, $\B_p = (\bb_1, \ldots, \bb_p)$, we define the `ensemble residual' as
\begin{displaymath}
\resens = \ensc - \B_p (\B_p^{T} \weight^{-1} \B_p)^{-1} \B_p^{T} \weight^{-1} \ensc
\end{displaymath}
This represents the variability in the ensemble not explained by $\B_p$. Define the `residual basis', $\resbas$, to be the matrix containing the right singular vectors of $\resens$. The residual basis gives basis vectors that explain the remaining variability in $\ensc$, given vectors $\B_p$.



\subsection{The optimal rotation algorithm}\label{algorithm_sect}

Given an orthogonal basis $\B$ for $\ensc$ with dimension $\ell \times n$; a positive definite $\ell \times \ell$ weight matrix $\weight = \var_{\disc} + \var_{\err}$; a vector $\textbf{v}$, where $v_i$ is the minimum proportion of the ensemble variability to be explained by the $i^{th}$ basis vector; the total proportion of ensemble variability to be explained by the basis $v_{tot}$; and a bound $T$ (usually that implied by history matching, $\ibound = \chi^2_{\ell, 0.995})$, we find an optimal basis for performing calibration as follows:
\begin{enumerate}
	\item If $\R_{\weight} (\B, \obs) > T$, stop and revisit the specification of $\weight$, or add more runs to $\ensc$. Else set $k = 1$.
	\item Let $\bas_k^* = (\bass_1^*, \ldots, \bass_{k - 1}^*, \B \rott_k)$ and set
	\begin{displaymath}
	\rott_k^* = arg min_{\rott_k} \R_{\weight}(\bas_k^*, \obs)
	\end{displaymath}
	such that $\V_k(\bas_k^*, \ensc) \geq v_k.$
	Define the new normalised vector as
	\begin{displaymath}
	\bass^*_k = \frac{\B \rott_k^*}{\sqrt{\lVert \B \rott_k^* \rVert_{\weight}}},
	\end{displaymath}
	and set $\bas_k^* = (\bass_1^*, \ldots, \bass_{k-1}^*, \bass_k^*)$.
	\item Find the residual basis given $\bas_k^*$, $\B_{\epsilon}^k$, and form the orthogonal rank $n$ basis
	\begin{displaymath}
	\bas^* = (\bas_k^*, [\B_{\epsilon}^k]_{n-k}).
	\end{displaymath}
	\item Define $q \geq k$ as the minimum value satisfying $\V(\bas_q^*, \ensc) \geq v_{tot},$ 
	where $\bas_q^*$ represents the first $q$ columns of $\bas^*$. If $\R_{\weight} (\bas_q^*, \obs) < T,$
	then stop, and return $\bas_q^*$ as the truncated basis for calibration. Else, set $k = k + 1$ and $\B = [\B_{\epsilon}^k]_{n-k}$, and return to step 2.
\end{enumerate}
Prior to applying the algorithm, we must specify an initial basis, $\B$, a weight matrix, $\weight$, and the parameters $v_{tot}$ and $\textbf{v}$ to control the amount of variability explained by each basis vector. We use the SVD basis (with respect to $\weight$) for $\B$, however other choices are possible, e.g. we could apply Gram-Schmidt to the ensemble itself and rotate this, or apply a different scaling to the SVD basis.
\par
At each step, our algorithm selects the linear combination of a given basis that minimises $\R_{\weight}(\cdot, \obs)$, subject to explaining a given percentage of ensemble variability, and given any previously selected basis vectors. If the defined truncation $\bas_q^*$ satisfies $\R_{\weight}(\bas_q^*, \obs) < T$, then the algorithm terminates, as standard residual variance maximising basis vectors no longer lead to a terminal case analysis. We allow a basis to be identified that satisfies our two goals: we do not rule out $\obs$, and have coefficients that can be emulated, if $\textbf{v}$ is set appropriately. To optimise for $\rott_k$, we use simulated annealing \citep
{gensa2013}, although any optimisation scheme that converges could be used.
\par
The check in step 1 of our algorithm is due to the following result (proved in S2):
\begin{result}[Invariance of $\R_{\weight}(\cdot, \cdot)$ to rotation] For a rotation matrix $\rot$ of dimension $k \times k$, and a set of basis vectors $\B = (\bb_1, \ldots, \bb_n)$, we have
	\begin{equation}
	\R_{\weight}(\B_k, \obs) = \R_{\weight}(\B_k \rot, \obs), \quad k = 1, \ldots, n
	\end{equation}
\end{result}
Regardless of the rotation that is applied to $\B$, we cannot reduce the reconstruction error below that given by the full basis originally. However, because the SVD basis is always truncated prior to this minimum value being reached, we can search for a rotation that rearranges the information from the SVD basis in such a way that satisfies
\begin{equation}
\R_{\weight}((\B \rot)_q, \obs) << \R_{\weight}(\B_q, \obs), 
\end{equation}
incorporating important, potentially low-order, patterns into the $q$ basis vectors that we emulate. Hence step 1 of the algorithm provides an important test as to whether our ensemble and uncertainty assessment, $(\ens, \weight)$, are sufficient to avoid a terminal case analysis, and shows when a rotation exists, up to the choice of $\textbf{v}$.
\begin{theorem} \label{theoremrotation}
$\bas^*$ in step 3 of the optimal rotation algorithm is an orthogonal rotation of $\B$.
\end{theorem}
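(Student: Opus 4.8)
The plan is to establish the equivalent pair of statements that $\bas^*$ is orthonormal in the weighted inner product, i.e. $\bas^{*T}\weight^{-1}\bas^* = \boldsymbol{\mathbb{I}}_n$, and that $\mathrm{col}(\bas^*) = \mathrm{col}(\B)$, where $\B$ denotes the input basis and $\mathrm{col}(\cdot)$ the column space. This suffices: $\B^{T}\weight^{-1}\B = \boldsymbol{\mathbb{I}}_n$ with $\weight$ positive definite forces $\B$ to have full column rank $n$, so the inclusion $\mathrm{col}(\bas^*)\subseteq\mathrm{col}(\B)$ lets us write $\bas^* = \B\rot$ for a unique $n\times n$ matrix $\rot$; then $\boldsymbol{\mathbb{I}}_n = \bas^{*T}\weight^{-1}\bas^* = \rot^{T}(\B^{T}\weight^{-1}\B)\rot = \rot^{T}\rot$, so $\rot$ is orthogonal (and flipping the sign of one column of $\bas^*$ makes $\det\rot = 1$ if a proper rotation is wanted). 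Note that the optimisation over $\rott_k$ and the variance constraints $\V_k(\cdot,\ensc)\geq v_k$ are irrelevant to this claim beyond ensuring the returned $\rott_k^*$ is nonzero, so that the normalisation in step 2 is well defined; the argument is purely structural.

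I would argue by induction on the iteration counter $k$, carrying the ``current'' basis $\B^{(k)}$ that enters step 2 --- that is, $\B^{(1)} = \B$, and for $k>1$ the residual basis produced in step 3 of the previous iteration, $\B^{(k)} = [\B_{\epsilon}^{k-1}]_{n-k+1}$. The inductive hypothesis $\mathcal{H}_k$ is that the $\ell\times n$ matrix $(\bass_1^*,\dots,\bass_{k-1}^*,\B^{(k)})$ is $\weight^{-1}$-orthonormal and has column space $\mathrm{col}(\B)$; in particular $\B^{(k)}$ has $n-k+1$ columns, is $\weight^{-1}$-orthonormal, and spans the $\weight^{-1}$-orthogonal complement of $\mathrm{span}\{\bass_1^*,\dots,\bass_{k-1}^*\}$ inside $\mathrm{col}(\B)$. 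The base case $k=1$ is exactly $\B^{T}\weight^{-1}\B = \boldsymbol{\mathbb{I}}_n$.

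For the inductive step I would verify, in order: (i) the normalised vector $\bass_k^* = \B^{(k)}\rott_k^* / \sqrt{\lVert \B^{(k)}\rott_k^*\rVert_{\weight}}$ satisfies $\lVert\bass_k^*\rVert_{\weight}=1$, since $\lVert\B^{(k)}\rott_k^*\rVert_{\weight} = \rott_k^{*T}\big((\B^{(k)})^{T}\weight^{-1}\B^{(k)}\big)\rott_k^* = \lVert\rott_k^*\rVert^2$ by $\mathcal{H}_k$, and $(\bass_i^*)^{T}\weight^{-1}\bass_k^* = 0$ for $i<k$ because $\bass_k^*\in\mathrm{col}(\B^{(k)})$; hence $\bas_k^{*T}\weight^{-1}\bas_k^* = \boldsymbol{\mathbb{I}}_k$, so $P_k := \bas_k^*\bas_k^{*T}\weight^{-1}$ is the $\weight^{-1}$-orthogonal projector onto $\mathrm{col}(\bas_k^*)$ and the ensemble residual is $\resens = (\boldsymbol{\mathbb{I}}_\ell - P_k)\ensc$; (ii) $\bas_k^{*T}\weight^{-1}\resens = \textbf{0}$, and, using that $\B$ being a basis for $\ensc$ gives $\mathrm{col}(\ensc) = \mathrm{col}(\B)$ and that $\mathrm{col}(\bas_k^*)\subseteq\mathrm{col}(\B)$, that $\mathrm{col}(\resens) = (\boldsymbol{\mathbb{I}}_\ell - P_k)\mathrm{col}(\B)$ equals the $\weight^{-1}$-orthogonal complement $\mathcal{C}_k$ of $\mathrm{col}(\bas_k^*)$ inside $\mathrm{col}(\B)$, a space of dimension $n-k$; (iii) therefore the residual basis $[\B_{\epsilon}^{k}]_{n-k}$, being the right singular vectors of $\resens$ in the generalised SVD with respect to $\weight$ (hence $\weight^{-1}$-orthonormal, as in the generalised SVD used elsewhere in the paper), has exactly $n-k$ columns and spans $\mathcal{C}_k$; and (iv) assemble: with $\B^{(k+1)} := [\B_{\epsilon}^{k}]_{n-k}$, the matrix $(\bass_1^*,\dots,\bass_k^*,\B^{(k+1)})$ has identity Gram matrix (diagonal blocks by (i) and (iii); off-diagonal blocks vanish because $\mathrm{col}(\B^{(k+1)}) = \mathcal{C}_k$ is $\weight^{-1}$-orthogonal to $\mathrm{col}(\bas_k^*)\ni\bass_i^*$) and column space $\mathrm{col}(\bas_k^*)\oplus\mathcal{C}_k = \mathrm{col}(\B)$, which is precisely $\mathcal{H}_{k+1}$.

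Given $\mathcal{H}_{k+1}$, the matrix formed in step 3 of iteration $k$ is $\bas^* = (\bas_k^*,[\B_{\epsilon}^{k}]_{n-k}) = (\bass_1^*,\dots,\bass_k^*,\B^{(k+1)})$, which is exactly the $\weight^{-1}$-orthonormal, full-rank matrix with column space $\mathrm{col}(\B)$ supplied by $\mathcal{H}_{k+1}$; by the reduction in the first paragraph it equals $\B\rot$ for an orthogonal $\rot$, which proves the theorem (and justifies the phrase ``orthogonal rank $n$ basis'' already used in step 3). I expect the delicate point to be the rank bookkeeping in (ii)--(iii): one must check that $P_k$ is a genuine $\weight^{-1}$-orthogonal projector, which is why establishing $\bas_k^{*T}\weight^{-1}\bas_k^* = \boldsymbol{\mathbb{I}}_k$ first is essential, and that $\resens$ has rank exactly $n-k$, so that the residual basis returns precisely as many vectors as the algorithm asks for. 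This last point is where one uses $\mathrm{col}(\B) = \mathrm{col}(\ensc)$ rather than something larger, and it is worth stating explicitly, since for a centred ensemble $n$ should be read as the rank of $\ensc$. Everything else reduces to short computations with the weighted inner product.
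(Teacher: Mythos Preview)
Your proposal is correct and rests on the same two facts the paper isolates as separate results: that the residual basis is $\weight^{-1}$-orthogonal to the already-selected vectors $\bas_k^*$, and that every column produced (selected vectors and residual basis alike) lies in $\mathrm{col}(\B)$ because $\B$ spans $\ensc$. The organisation differs: the paper first writes $\bas^* = \B\rot$ explicitly (by showing each $\bass_j^* = \B\tilde{\rott}_j$ and $\B_\epsilon^k = \B\rot_\epsilon^k$ via an inductive computation on the residual ensemble), and then checks $\rot^T\rot = \boldsymbol{\mathbb{I}}_n$ block by block using $\B^T\weight^{-1}\B = \boldsymbol{\mathbb{I}}_n$. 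You reverse this: you establish $\weight^{-1}$-orthonormality of $\bas^*$ and the column-space equality $\mathrm{col}(\bas^*)=\mathrm{col}(\B)$ inductively, and only then deduce the existence of an orthogonal $\rot$. Your route is a little more geometric (projectors and orthogonal complements rather than matrix-product identities), and your explicit tracking of rank --- in particular that $\resens$ has rank exactly $n-k$ so the residual basis returns the right number of vectors --- is a point the paper handles more implicitly. Both arguments are short and equivalent in content; neither offers a real shortcut over the other.
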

The results and proofs required, and the proof of Theorem \ref{theoremrotation} itself, are found in Section S2. Given that $\B$ passes step 1 of the algorithm, existence of an optimal rotation depends on the choice of $\textbf{v}$:
\begin{theorem}
At the $k^{th}$ iteration of the optimal rotation algorithm, given an orthogonal $\bas_{k-1}^*$ that satisfies $\V_j(\bas_{k-1}^*, \ensc) \geq v_j, \, j = 1, \ldots, k-1$, $\exists \, \bass_k^* \, _\bot \, \bas_{k-1}^*$ with $\V(\bass_{k}^*, \ensc) \geq v_k$ and $\R_{\weight}(\bas_k^*, \obs) \leq \R_{\weight}(\tilde{\bas}_k, \obs) \leq \R_{\weight}(\bas_{k-1}^*, \obs)$, for 
$\bas_k^* = (\bas_{k-1}^*, \bass_k^*), \tilde{\bas}_k = (\bas_{k-1}^*, \tilde{\bass}_k)$, and $\V(\tilde{\bass_{k}}, \ensc) \geq v_k \, \forall \tilde{\bass_k} \, _\bot \, \bas_{k-1}^*$ $\, \iff \, \V_1(\B_{\epsilon}^{k-1}, \ensc) \geq v_k$. In this case the algorithm converges to $\bass_k^*$.
\end{theorem}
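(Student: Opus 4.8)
The plan is to reduce the whole statement to one fact about the leading residual basis vector: that $\V_1(\B_{\epsilon}^{k-1}, \ensc)$ equals the maximum of $\V(\bb, \ensc)$ over all nonzero $\bb$ that are $\weight^{-1}$-orthogonal to $\bas_{k-1}^*$. Once this is established, both directions of the equivalence, the attainment of the minimum, and the inequality chain all follow with little extra work. First I would note that $\V(\cdot, \ensc)$ and $\R_{\weight}(\cdot, \obs)$ depend only on the directions of their basis vectors (they are built from $\weight^{-1}$-orthogonal projections, which are scale invariant), so without loss of generality I may restrict attention to the $\weight^{-1}$-unit sphere. Then I would use the residual construction: for $\bb$ $\weight^{-1}$-orthogonal to $\bas_{k-1}^*$ one has $\bb^T \weight^{-1} \resens = \bb^T \weight^{-1} \ensc$, because the $\weight^{-1}$-orthogonal projector onto $\mathrm{span}(\bas_{k-1}^*)$ annihilates such $\bb$ and is $\weight^{-1}$-self-adjoint. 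Hence, on this orthogonal complement, the numerator of $\V(\bb, \ensc)$ is the quadratic form $\bb^T \weight^{-1} \resens \resens^T \weight^{-1} \bb$ up to the fixed normalising constant $\sum_j \lVert f(\x_j) - \boldsymbol{\mu} \rVert_{\weight}$, i.e. a Rayleigh quotient in the $\weight^{-1}$ inner product whose maximiser over the span of $\resens$ is the leading right singular vector of $\resens$, namely the first column of $\B_{\epsilon}^{k-1}$; components of $\bb$ outside $\mathrm{span}(\B_{\epsilon}^{k-1})$ only shrink the ratio. This is precisely the optimality property of the (generalised) SVD quoted in Section \ref{sectionbasis}, applied to $\resens$, and it identifies $\V_1(\B_{\epsilon}^{k-1}, \ensc)$ with the claimed maximum.

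Given this, the forward implication is immediate: if the asserted $\bass_k^*$ exists it is a nonzero vector $\weight^{-1}$-orthogonal to $\bas_{k-1}^*$ with $\V(\bass_k^*, \ensc) \geq v_k$, so $\V_1(\B_{\epsilon}^{k-1}, \ensc) \geq \V(\bass_k^*, \ensc) \geq v_k$ by maximality. For the converse, assume $\V_1(\B_{\epsilon}^{k-1}, \ensc) \geq v_k$. The feasible set $S = \{\bb : \bb^T\weight^{-1}\bb = 1,\ \bb \perp \bas_{k-1}^*,\ \V(\bb, \ensc) \geq v_k\}$ then contains the first column of $\B_{\epsilon}^{k-1}$, so it is nonempty; it is also compact, being a closed subset (by continuity of $\V(\cdot, \ensc)$, whose denominator is a fixed positive constant) of the $\weight^{-1}$-unit sphere intersected with $\mathrm{span}(\B_{\epsilon}^{k-1})$. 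On $S$ the map $\bb \mapsto \R_{\weight}((\bas_{k-1}^*, \bb), \obs)$ is continuous, since the Gram matrix $(\bas_{k-1}^*, \bb)^T\weight^{-1}(\bas_{k-1}^*, \bb)$ is block diagonal with blocks $(\bas_{k-1}^*)^T\weight^{-1}\bas_{k-1}^*$ and $1$ and hence uniformly invertible. So the objective attains its minimum on $S$ at some $\bass_k^*$, which is the required vector.

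It then remains to check the inequality chain. By the best-approximation property of $\weight^{-1}$-orthogonal projection, enlarging a basis cannot increase $\R_{\weight}(\cdot, \obs)$; since $\mathrm{col}(\bas_{k-1}^*) \subseteq \mathrm{col}(\tilde\bas_k)$ for every admissible $\tilde\bass_k$, this gives $\R_{\weight}(\tilde\bas_k, \obs) \leq \R_{\weight}(\bas_{k-1}^*, \obs)$, and in particular $\R_{\weight}(\bas_k^*, \obs) \leq \R_{\weight}(\bas_{k-1}^*, \obs)$. The remaining inequality $\R_{\weight}(\bas_k^*, \obs) \leq \R_{\weight}(\tilde\bas_k, \obs)$ for every $\tilde\bass_k$ with $\V(\tilde\bass_k, \ensc) \geq v_k$ is exactly the defining property of $\bass_k^*$ as a minimiser over $S$. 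Compactness of $S$ with continuity of the objective also shows step 2 is a well-posed global minimisation, so any convergent global search (the simulated annealing scheme in particular) converges to $\bass_k^*$. The one step that is not routine bookkeeping is the identification in the first paragraph: one must keep the $\weight^{-1}$-geometry straight and correctly connect the constrained maximisation of $\V$ to the singular value decomposition of the ensemble residual $\resens$, which is where the residual basis $\B_{\epsilon}^{k-1}$ enters; everything after that is compactness, continuity, and the monotonicity of reconstruction error under basis extension.
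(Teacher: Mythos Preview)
Your proposal is correct and follows essentially the same route as the paper: both hinge on the fact that $\V_1(\B_{\epsilon}^{k-1},\ensc)$ is the maximum of $\V(\bb,\ensc)$ over vectors orthogonal to $\bas_{k-1}^*$ (the paper states this ``by construction''; you justify it via the Rayleigh-quotient characterisation of the generalised SVD), then derive both directions of the equivalence and the inequality chain from monotonicity of $\R_{\weight}$ under basis extension. Your compactness/continuity argument for the existence of the minimiser $\bass_k^*$ is a welcome addition---the paper simply appeals to ``convergence of the optimiser'' for both existence and optimality, whereas you separate these cleanly---but the underlying structure of the argument is the same.
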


\begin{proof}
By construction, $\V_1(\B_{\epsilon}^{k-1}, \ensc) = \max_{j} \V_j(\B_{\epsilon}^{k-1}, \ensc) = \max \V(\boldsymbol{\epsilon}, \ensc) \,$ $\forall \, \boldsymbol{\epsilon} \in span \{ \resens^{k-1} \}$. Hence if $\V_1(\B_{\epsilon}^{k-1}, \ensc) < v_k, \, \not\exists \, \bass_k^* = \B_{\epsilon}^{k-1} \rott_k$ such that $\V(\bass_k^*, \ensc) \geq v_k$.
  
\noindent If $\V_1(\B_{\epsilon}^{k-1}, \ensc) \geq v_k \implies \exists \bass_k^* = \B_{\epsilon}^{k-1} \rott_k$ with 

i) $\V(\bass_k^*, \ensc) \geq v_k$,

ii) $\bass_k^* \, _\bot \bas_{k-1}^*$ (by Theorem \ref{theoremrotation}), 

iii) $\R_{\weight}(\bas_k^*, \obs) \leq \R_{\weight}(\bas_{k-1}^*, \obs)$: let $\cc_{k-1}^* = ((\bas_{k-1}^*)^T \weight^{-1} \bas_{k-1}^*)^{-1} (\bas_{k-1}^*)^T \weight^{-1} \obs$ and $\cc_k^* = ((\bas_k^*)^T \weight^{-1} \bas_k^*)^{-1} (\bas_k^*)^T \weight^{-1} \obs$ be the coefficients given by projecting $\obs$ onto $\bas_{k-1}^*$ and $\bas_k^*$ respectively. Let $\cc^* = (\cc_{k-1}^*, 0)$, then
\begin{displaymath}
\R_{\weight}(\bas_{k-1}^*, \obs) = \lVert \obs - \bas_{k-1}^* \cc_{k-1}^* \rVert_{\weight} = \lVert \obs - \bas_k^* \cc^* \rVert_{\weight} \geq \lVert \obs - \bas_k^* \cc_k^* \rVert_{\weight} = \R_{\weight}(\bas_k^*, \obs).
\end{displaymath}
as by construction $\cc_k^*$ minimises the reconstruction error in the $\weight$ norm.

 Finally, $\R_{\weight}(\bas_k^*, \obs) \leq \R_{\weight}(\tilde{\bas}_k, \obs) \, \forall \, \tilde{\bass}_k = \B_{\epsilon}^{k-1} \tilde{\rott}_k$ with $\V(\tilde{\bass}_k, \ensc) \geq v_k$ (by convergence of the optimiser, e.g. \citet{aarts1985statistical} for simulated annealing).
\end{proof}

In practice, when applying our algorithm to high dimensional model output, we have found that only a small number (three or fewer) of iterations have been required, hence $\textbf{v}$ often has a low dimension. The values of $\textbf{v}$ required will depend on the problem, with a different approach required when a small number of vectors explain the majority of the ensemble, compared to when a large proportion of the variability is spread across many SVD basis vectors. In the former case, the values of $\textbf{v}$ may be relatively high, whilst in the latter they can be lower, relative to the proportion explained by the equivalent SVD basis vectors. A reasonable approach is to initially set $\textbf{v}$ as half of the proportion explained by the corresponding SVD basis vectors, reducing these further if the resulting $q$ is too large. As Theorem 2 shows, it is possible to set $\textbf{v}$ in such a way that the algorithm is unable to find a suitable basis. If we cannot find a $k^{th}$ basis vector that satisfies the variability constraint, given $\bas_{k-1}^*$, then a basis doesn't exist for this choice of $\textbf{v}$, and the specification needs revisiting: either $v_k$ needs to be decreased, or an earlier constraints needs relaxing.
\par
The choice $v_{tot}$ is also a concern for the standard UQ approaches based on principal components. In our experience, using similar rules (e.g. 95\% or 99\%) to the SVD applications leads to 0-2 extra basis vectors required.
\par
In an application, it may be desirable to include certain physical patterns, deemed to be important, in our basis $\B$, which may not lie within the subspace defined by $\ensc$. In this case, if we have $p$ selected physical vectors, $\B_p = (\bb_1, \ldots, \bb_p)$, combining these with the first $n - p$ vectors of the residual basis will not necessarily explain all of the variability in $\ensc$. The algorithm may be applied to the $n + p$ vectors given by the physical vectors and the full residual basis, giving a rotation of this space rather than of $\ensc$. As truncation occurs after the majority of variability of $\ensc$, $v_{tot}$, is explained, the resulting truncated basis, while not strictly a rotation of the subspace defined by $\ensc$, will exhibit similar qualities, and may be superior for representing $\obs$, if important physical patterns can be emulated when combined with signal from the ensemble.
\par
To perform the algorithm with basis vectors from outside the subspace defined by $\ensc$, rather than finding linear combinations of the residual basis at step $k > 1$, $\B = (\B_p, \resbas)$ is used at each step, with orthogonality imposed after each new basis vector has been selected, via Gram-Schmidt (as by Result S3, applying Gram-Schmidt does not affect $\R_{\weight}(\cdot, \obs)$).

\subsection{Idealised example continued}

We now apply the optimal rotation algorithm to the example of Section \ref{spatial}. We set $\textbf{v} = (0.4, 0.1, 0.1)$, $v_{tot} = 0.95$, and $\B$ as the SVD basis, with the projection of \eqref{projecteqn} used for consistency with Section \ref{examplesection}, to show that rotation fixes the described problems. One iteration of the algorithm finds a basis that satisfies $\R_{\weight} (\bas_q^*, \obs) < T$, with $q = 5$ (i.e. we need the first 4 vectors of the residual basis so that $\bas_q^*$ explains at least 95\% of $\ensc$). 
\par
The reconstruction of $\obs$ with this basis, and associated VarMSE plot, are shown in Figure \ref{wave1reconvar}. Now, our basis allows us to accurately represent $\obs$ with the leading vectors, as the important patterns from low-order eigenvectors have been combined with the leading patterns (hence an additional vector being required to explain more than 95\% of $\ensc$).
\par
Performing history matching as before, and using the reconstructions of the original fields rather than the coefficients, we find that 31.5\% of $\tspace$ is now in NROY space (Figure S5). Performing our previous check on the accuracy of the match, we find that no runs consistent with $\obs$ have been ruled out.
\par
As we are no longer in the terminal case, we perform probabilistic calibration on the field. The posterior densities found by calibrating on $\bas_q^*$ are shown in Figure S3, with the average simulator output given by samples from this posterior in the first plot of Figure \ref{calw123mean}. While the samples here are not consistent with $\obs$, as the off-diagonal is too strong, we have been able to identify runs where there is signal on the main diagonal. This is because the rotated basis allows for this direction of the output space to be searched. The limited signal in the important directions from $\ens$ has been extracted and used to guide calibration.
\par
\begin{figure}[t]
\centering
\includegraphics[width = 0.55\linewidth]{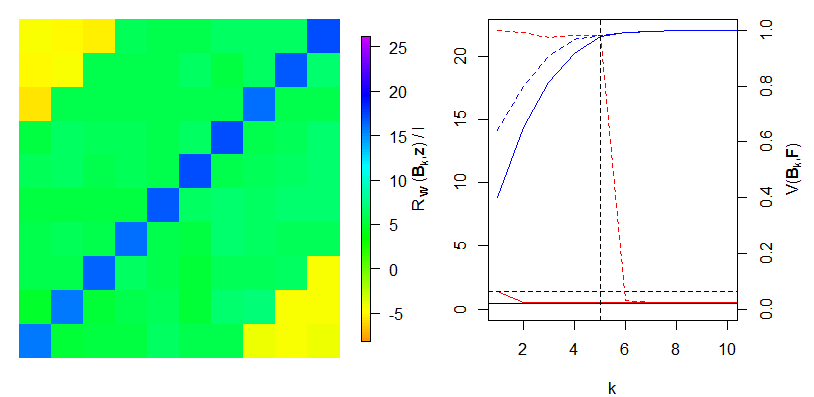}
\caption{The reconstruction of $\obs$ using the truncated basis $\bas^*_q$, and the VarMSE plot for this basis, with the truncated SVD basis given by the red and blue dotted lines.}
\label{wave1reconvar}
\end{figure}
We continue the calibration by running a new design within NROY space. This new design should contain more signal in the direction of $\obs$, and hence it should be possible to find a rotation that reduces $\R_{\weight} (\cdot, \obs)$ further than at the previous wave. We select 60 points from the wave 1 NROY space and run $f(\cdot)$ at these points to give the wave 2 ensemble. We perform a rotation, and emulate and calibrate using the wave 2 ensemble. History matching reduces NROY space to 3.1\% of $\tspace$ (Figure S7). If we instead perform probabilistic calibration, with zero density assigned to regions outside of the wave 1 NROY space, we find the average output field in the 2nd plot of Figure \ref{calw123mean} (posteriors in Figure S6).
\par
These results represent a large improvement over performing only one wave. We have ruled out the majority of $\tspace$, allowing future runs to be focussed in this region. Probabilistic calibration is more accurate, with samples containing a strong diagonal, as with $\obs$.
\par
Repeating the process, our wave 3 ensemble contains patterns more consistent with $\obs$ than in previous waves, and hence the truncated SVD basis does not rule out the reconstruction of $\obs$, and no rotation is required. Following emulation for this basis, history matching leads to an NROY space consisting of 2\% of $\tspace$ (Figure S8). Probabilistic calibration (in the wave 2 NROY space) gives the average output in Figure \ref{calw123mean} (posteriors in Figure S6), showing that our samples are now consistent with $\obs$.
\par
\begin{figure}[t]
\centering
\includegraphics[width = 0.65\linewidth]{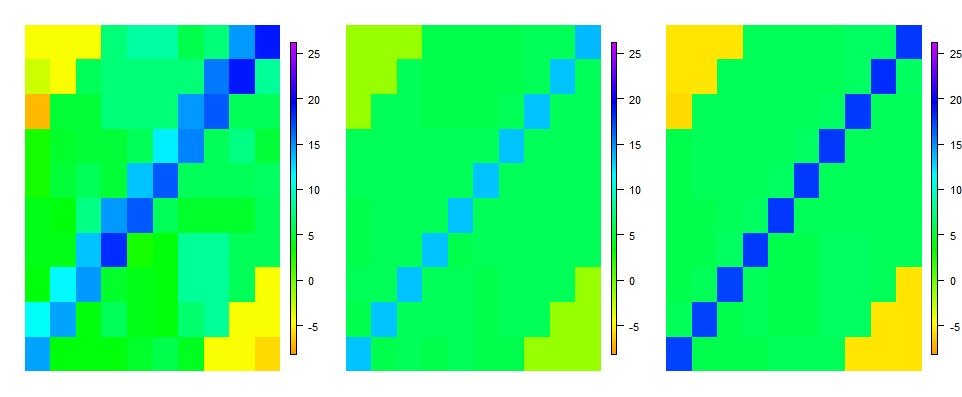}
\caption{The mean output $f(\x)$ for samples of $\x$ from the probabilistic calibration posterior, for the wave 1 rotated basis, the wave 2 rotated basis and the wave 3 SVD basis.}
\label{calw123mean}
\end{figure}

\section{Application to tuning climate models}  \label{canhm}

In this section, we will demonstrate that optimal rotation is an important and necessary tool if attempting to perform UQ for climate model tuning. However, as we will discuss, climate model tuning is not a solved problem, and it would be of limited value to simply show how calibration with our method can lead to an improvement in one output field over the standard methods, without necessarily improving the whole model or addressing the concerns of the community. We will motivate our discussion using the current Canadian atmosphere model, CanAM4.  

CanAM4 is an Atmospheric Global Climate Model, which integrates the 
primitive equations on a rotating sphere employing a spherical-harmonic 
spatial discretization truncated triangularly at total wavenumber 63 
(T63), with 37 vertical levels spanning the troposphere and stratosphere 
\citep{von2013canadian}.  CanAM4 has a large number of adjustable, 
`free', parameters of which 13 will be varied here. The climatological influence of each set of free 
parameters is determined from 5-year `present-day' integrations with 
prescribed sea-surface temperatures and sea-ice.  Model output is 
represented on the `linear' 128$\times$64 Gaussian grid corresponding to the 
model's T63 spectral resolution.

There are many output fields that must be checked for consistency with the observed climate when tuning the parameters of a climate model (in the case of CanAM4 there are more than 20). Here we focus on just 3 2D fields: vertical air temperature (TA), the top of the atmosphere radiative balance (RTMT, $Wm^{-2}$) and the cloud overlap percentage (CLTO). For RTMT and CLTO, the output is given over a  longitude-latitude grid, so that $\ell = 8192$. TA is the temperature averaged over longitude for each latitude and vertical pressure level so that $\ell = 2368$. There is also a temporal aspect to the output, with monthly values for every grid box; however, we remove this here by averaging over 5 years of June, July, August (JJA) output.
\begin{figure}
\centering
\begin{subfigure}{.33\textwidth}
  \centering
  \includegraphics[width=1.1\linewidth]{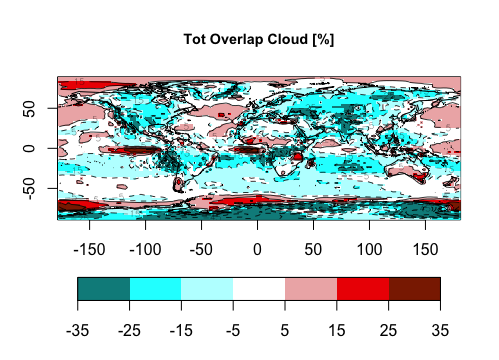}
\end{subfigure}%
\begin{subfigure}{.33\textwidth}
  \centering
  \includegraphics[width=1.1\linewidth]{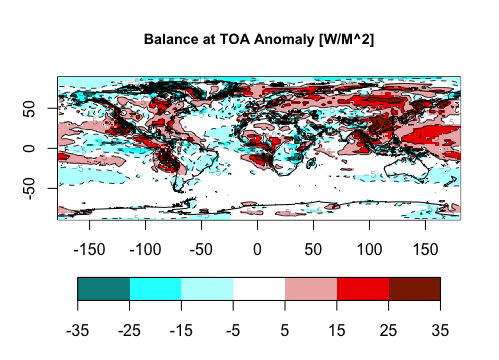}
\end{subfigure}
\begin{subfigure}{.33\textwidth}
  \centering
  \includegraphics[width=0.9\linewidth]{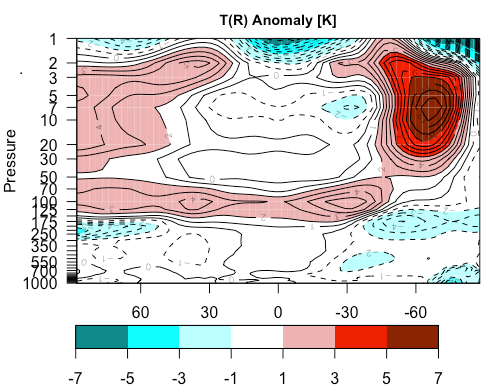}
\end{subfigure}
\caption{The standard CanAM4 anomaly field for a) CLTO, b) RTMT and c) TA.}
\label{anomalies}
\end{figure}
\par
When evaluating and tuning the model, spatial anomaly plots are routinely examined to see how the model compares with observations. An anomaly plot shows the difference between the model and the observations with a blue-white-red colour scale set such that blue is `too negative', white is `alright' and red is `too positive'. So, for example, in a temperature anomaly plot, red areas show where the model is too warm (for the modellers) compared to observations. Figure \ref{anomalies} shows anomaly plots for CLTO, RTMT, and TA for the standard configuration of CanAM4, with the colour scales representing the standard colours used by the modellers when tuning the model. 

A goal of tuning is to try to reduce or remove biases that are visible from these plots. Yet equally important is to learn which biases cannot be removed simply by adjusting the model parameters. This is the search for `structural errors' in the model (what statisticians would call model discrepancies). Structural errors indicate that there are flaws with individual parametrisations, or with the way they interact, that cannot be fixed by tuning. Understanding what these structural errors are so that they might be addressed either as part of this phase of development or for the next is one of the major goals of tuning \citep{williamson2015bias}. However, joint estimation of model discrepancy variances and model parameters is not possible without strong prior information \citep{brynjarsdottir2014learning} due to lack of identifiability. 

When working with CanAM4 then, our goal is to use history matching with a `tolerance to error' discrepancy variance \citep{williamson2015bias, williamson2017tuning} that aims to reduce the size of NROY space, so that, ultimately, in a calibration exercise we have strong prior information about $\best$ and some structured information on discrepancy. A formal methodology for achieving this is beyond the scope of this paper. However, we will demonstrate that optimal rotation is a crucial component for any attempt of this nature.

\begin{figure}[t]
\centering
\includegraphics[width = 0.7\linewidth]{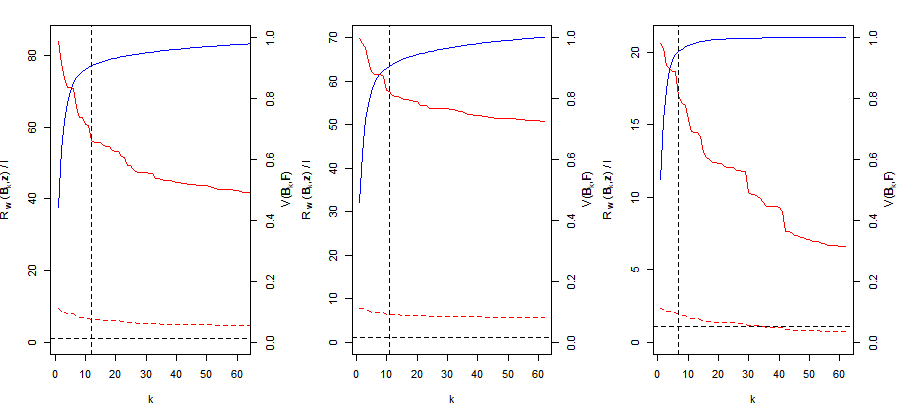}
\caption{VarMSEplots for a) CLTO b) RTMT and c) TA, with $\weight$ based on 1SD (dotted line) and 3SD (solid line). The dotted horizontal line indicates $\ibound$.}
\label{canamw1var}
\end{figure}

We designed 62 runs of CanAM4, varying $13$ parameters and using a $k$-extended Latin Hypercube \citep{williamson2015exploratory}. Figure \ref{canamw1var} shows VarMSE plots for the output fields CLTO, RTMT and TA for this ensemble. The weight matrix $\weight$ used for the reconstructions represents our tolerance to model error (we discussed its correspondence to model discrepancy ($\weight =\var_{\disc}+\var_{\err}$) in Section 4), and the red lines in these plots represent 2 alternatives based on interpreting the colour scales pertaining to the white regions in Figure \ref{anomalies}. We assume that the white region represents 3 standard deviations (solid red line) and 1 standard deviation (dashed red line), and set a diagonal $\weight$ accordingly. The solid red lines on each plot indicate that we have a terminal case analysis under the small model discrepancy. 

The larger discrepancy indicates a terminal case in CLTO and RTMT, and that 35 basis vectors would be enough to adequately reconstruct TA. However, the blue line in the TA plot shows that there is so little ensemble signal on the basis vector coefficients after arguably 20 (or fewer) basis vectors, that calibration on 35 basis vectors is not possible. If discrepancy were increased (an operation that involves scaling the red line until it lies below the bound $T$ represented by the dashed horizontal line in the plots), all 3 panels demonstrate that the reconstruction error under SVD decreases too slowly, so that a large number of basis vectors, each with coefficients that are increasingly difficult to emulate due to the decreasing ensemble signal, would be required to avoid a terminal case analysis.

Suppose model discrepancy $\var_{\disc} >> \var_{\err}$ so that we can consider $\weight = \var_{\disc}$ in the following. In order to use optimal rotation, we require $\weight$ such that $\R_{\weight} (\B, \obs) < T$, which is not true under our specification above for RTMT and CLTO. If we really believed our $\var_{\disc}$ represented the climate model's ability to reproduce observed climate, then this indicates that we need a larger ensemble in order to explore the model's variability. In that case, it may be desirable to follow a procedure like the one we present here to design these runs. 

In our case, we believe it is clear that we have misspecified model discrepancy. In fact, we used a place-holder tolerance to error, so this analysis indicates that we are not tolerant enough to model error (at this stage). To explore model discrepancy, we first perform a rotation under the $\weight$ given above, using the algorithm without step 1 in order to find $\R_{\weight}(\bas^*_q, \obs)$ as close to the reconstruction error of the untruncated SVD basis as possible, for small $q$ and whilst retaining emulatability by setting $\textbf{v}=(0.35,0.1,0.1)$ (as 3 rotated vectors is enough). Given this rotation, we then set
\begin{equation} \label{newdisc}
\var_{\disc} = \frac{\R_{\weight}(\bas_q^*, \obs)}{b} \weight, \quad b = \chi^2_{\ell, j}
\end{equation}
where $j < 0.995$ is a tuning parameter. This ensures that when reconstructed with the new basis, the observations will not be ruled out, and hence we can identify an NROY space likely to contain runs as consistent with $\obs$ as possible, given the limited information we have with 62 ensemble members. This has the effect of `scaling' the reconstruction error for the rotated basis seen in Figure \ref{canamw1rot} below the horizontal dotted line at the point the basis is truncated. For our fields, we set $j = 0.95$ for RTMT, and $j = 0.68$ for the others (as $j = 0.95$ ruled out all of $\tspace$ for CLTO and TA).

\begin{figure}[t]
\centering
\includegraphics[width = 0.7\linewidth]{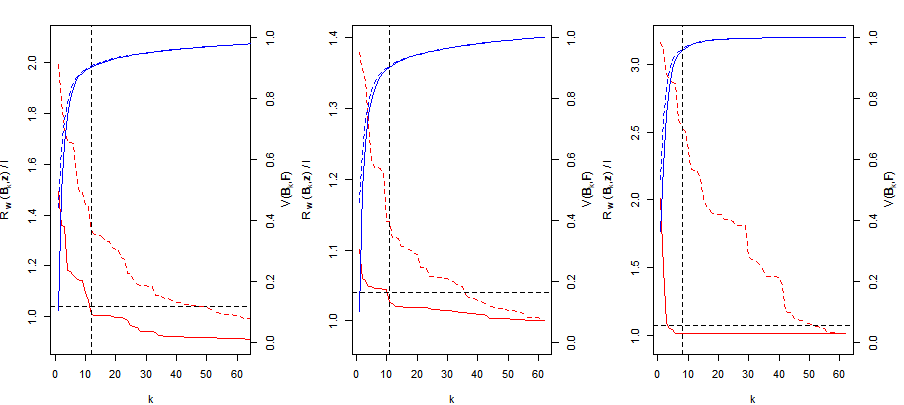}
\caption{VarMSEplots for CLTO, RTMT and TA, with $\weight = \var_{\disc}$, with rotated basis (solid lines) and SVD basis (dotted lines). The dotted horizontal line indicates $\ibound$.}
\label{canamw1rot}
\end{figure}



We define NROY space as runs where $\x$ is not ruled out using \eqref{mvimpl} for each of TA, CLTO and RTMT. This NROY space consists of 0.9\% of $\tspace$. We then design and run a new 50-member ensemble within this NROY space (discussed in \citet{thesis}, Section 6.3.5).
\par
Upon inspection of the TA field for this wave 2 ensemble, we observe that every run contains the previously found strong warm bias in the Southern Hemisphere (Figure S9). As our optimal basis choice permitted the search for runs not containing this structural bias, these results are evidence that this may be a structural error. In practice, how much evidence is required before the modellers are convinced that a particular bias is structured or not is a climate modelling decision. Certainly, we could repeat our wave 1 procedure within the current NROY space and run a wave 3 and so on. This has the benefit of increasing the density of points in $\tspace$ and the accuracy of emulators in key regions of $\tspace$, thus insuring against possible `spikes' in the model input space that would correct the bias.

Assuming our modellers were convinced to treat this feature as a structural bias, we demonstrate an approach to include this information within the iterative calibration procedure. We first revisit the specification of the TA discrepancy, selecting the region with the common warm bias shown in Figure S10, deemed to be a structural error, and increasing $\var_{\disc}$ for the grid boxes in this region. To do this, we set $\weight$ as a diagonal matrix with 100 for the grid boxes in this region, and 1s elsewhere on the diagonal (note this is one possible choice. We might, instead, increase the correlation between these gridboxes in $\weight$ in addition).
\par
We re-define the wave 1 NROY space so that it only depends on CLTO and RTMT (consisting of 41.4\% of $\tspace$), and then include NROY wave 1 runs with the wave 2 ensemble when rotating and building emulators for wave 2. For TA, the optimal rotation algorithm is applied using the newly-designed $\weight$, with the discrepancy $\var_{\disc}$ defined via \eqref{newdisc}, to ensure that $\obs$ is not ruled out ($\weight$ reflected our beliefs about the structure, not the magnitude, of $\var_{\disc}$). History matching using the wave 2 bases and emulators leads to an NROY space containing 0.03\% of $\tspace$. Plots illustrating this NROY space for six of the more active parameters are shown in Figure \ref{w2nroycanada}. We see that the regions with the greatest density of points in NROY space are generally found towards the edges of the parameter ranges. From the lower left plots, it is easier to identify relationships between some of the parameters, e.g. CBMF generally needs to be high while UICEFAC needs to be towards the centre of its allowable range.

\begin{figure}[t]
\centering
\includegraphics[width = 0.42\linewidth]{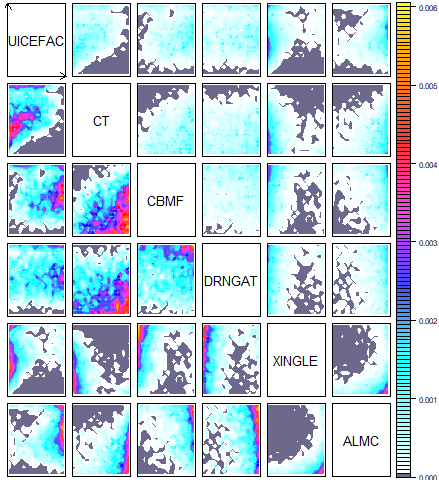}
\caption{Plot showing the composition of the wave 2 NROY space for 6 of the parameters of CanAM4. For each cell of a pairwise plot, a large sample is taken over the remaining parameters and the proportion of space that is not ruled out is calculated. The lower left gives the same plots, with scales set for each individual plot to show more structure.}
\label{w2nroycanada}
\end{figure}

The calibration of climate models, or even simply the search for structural biases, is a massive undertaking, and a full tuning is well beyond the scope of this paper. Each small ensemble of CanAM4 required 2 days of super-computer time to run and, in reality, the modellers routinely check over $20$ spatial fields (and a host of other metrics) when tuning the model. UQ can help with the tuning procedure in providing tools (emulators) that allow $\tspace$ to be explored much more quickly than is currently possible at the modelling centres. However, as this application has demonstrated, using the off-the-shelf methods based on the SVD basis does not work for tuning in general. It did not work for the 3 fields we showed here, nor have the authors ever found climate model output for which the problems we identified here were not present. Our application demonstrates the optimal rotation algorithm as an effective solution to quickly find bases without these issues for calibration.

\vspace{-0.2in}
\section{Discussion}

In this paper, we highlighted the issue of terminal case analyses for the calibration of computer models. A terminal case analysis occurs when the prior assessment of model discrepancy is inconsistent with the computer simulator's ability to mimic reality, and leads either to useless and incorrect posterior distributions (using the fully Bayesian procedure) or ruling out all of parameter space (using history matching). We showed that even when the prior assessment of model discrepancy is not inconsistent with the ability of the simulator, dimension reduction of spatial output using the ensemble-derived principal components (the SVD basis) often leads to a terminal case analysis.

We proposed a rotation of the SVD basis to highlight and incorporate important low-signal patterns that may be contained in the original SVD basis, giving a new basis that avoids the terminal case when this is possible. We then presented an efficient algorithm for optimal rotation, guaranteeing to avoid the terminal case when the model discrepancy allows, whilst ensuring enough signal on leading basis vectors to permit the fitting of emulators. We proved that our algorithm gives a valid rotation of the original basis, and established a fast test to see whether a given ensemble of model runs and discrepancy specification automatically leads to a terminal case analysis prior to rotation. Our methods are presented for models with spatial output, however, if basis methods were to be used for more general high dimensional output (e.g. spatio-temporal), the optimal rotation approach would not change if, for example, PCA were taken over the entire spatio-temporal output for the design, as in \cite{higdon2008computer}. 

We demonstrated the efficacy of our method using an idealised application, and showed that it scaled up to the important case of spatial output for state-of-the-art climate models. Our application highlighted the issue of the terminal case for climate model analyses, and showed the problems with using SVD in practice. We applied history matching for 2 waves to CanAM4 and showed how, combined with optimal rotation, we can begin to distinguish between what the modellers term `structural errors' and `tuning errors'. 

A purely methodological UQ approach for tuning climate models does not exist. It may be tempting, for UQ practitioners who are not familiar with climate models, to claim that calibration of computer simulators is a `solved' problem and that `all' that is required is for the modellers to specify their model discrepancy. We believe that the challenge for model tuning lies in the understanding of this elusive quantity. For the statistical community, rather than focussing on developing comprehensive methods for calibrating climate models automatically, this should mean engaging with modellers to develop robust tools and methods to help identify and understand these errors. This type of approach would have obvious implications for tuning, but would also feed into model development as it becomes better understood which parameters control various biases, and therefore which parameterisations need particular attention during the next development cycle.

\bibliographystyle{Chicago}

\bibliography{fullbib}

\pagebreak

\begin{center}
\textbf{\Large Supplemental Material: Uncertainty quantification for computer models with spatial output using calibration-optimal bases}
\end{center}
\setcounter{equation}{0}
\setcounter{figure}{0}
\setcounter{table}{0}
\setcounter{page}{1}
\setcounter{section}{0}
\makeatletter
\renewcommand{\thefigure}{S\arabic{figure}}
\renewcommand{\thesection}{S\arabic{section}}
\renewcommand{\theequation}{S\arabic{equation}}
\renewcommand{\theresult}{S\arabic{result}}

\section{Idealised example}

The spatial idealised example $f(\x)$, introduced in Section 3.1, gives output over a $10 \times 10$ field, and has 6 input parameters defined on $[-1, 1]$:
\begin{align}\label{toyfndef}
\begin{split}
f(\x) &= 3(10x_2^2 \bassf_2 + 5x_3^2 \bassf_2 + (x_3 + 1.5x_1 x_2) \bassf_3 + 2x_2 \bassf_4 + x_3 x_1 \bassf_5 +
(x_2 x_1) \bassf_6 + x_2^3 \bassf_7 \\
&+ (x_2 + x_3)^2 \bassf_8 + 2) + 1.5 \pi_N(x_4, 0.2, 0.1^2) \bassf_1 \frac{x_5}{1.3+x_6} + \Psi_{10 \times 10}(0, 0.05^2)
\end{split}
\end{align}
for $\pi_N(x_4, 0.2, 0.1^2)$ the density function of the Normal distribution with mean 0.2 and variance $0.1^2$, and where $\Psi_{10 \times 10}(0, 0.05^2)$ gives a sample from a Normal distribution with mean zero and variance $0.05^2$, at each location in the $10\times10$ grid. Figure \ref{toybasis} shows $(\bassf_1, \ldots, \bassf_8)$, with $\bassf_1$ giving the pattern most consistent with the observations, and $\bassf_2$ the basis vector that dominates the ensemble. After evaluating $f(\x)$, the output is vectorised so that $\ell = 100$.
\par
We define $\best$ as
\begin{displaymath}
\best = (0.7, 0.01, 0.01, 0.25, 0.8, -0.9) 
\end{displaymath}
with the observed field, $\obs$, given by adding a sample from $\mathrm{N}(\textbf{0}, \boldsymbol{\Sigma}_{\err})$ to $f(\best)$, to represent observation error. We define $\boldsymbol{\Sigma}_{\err}$ using the squared exponential correlation function over the $10\times10$ grid, with the spatial coordinates denoted by $\textbf{s}_i = (s_{i1}, s_{i2})$ for $i = 1, \ldots, 100$. The $i,j^{th}$ entry of $100\times 100$ matrix $\boldsymbol{\Sigma}_{\err}$ is therefore
\begin{equation} \label{sigmaerr}
\boldsymbol{\Sigma}_{\err}^{ij} = \text{exp} \{ -(s_{i1} - s_{j1})^2 - (s_{i2} - s_{j2})^2 \}.
\end{equation}
\begin{figure}[h]
\centering
\includegraphics[width = \linewidth]{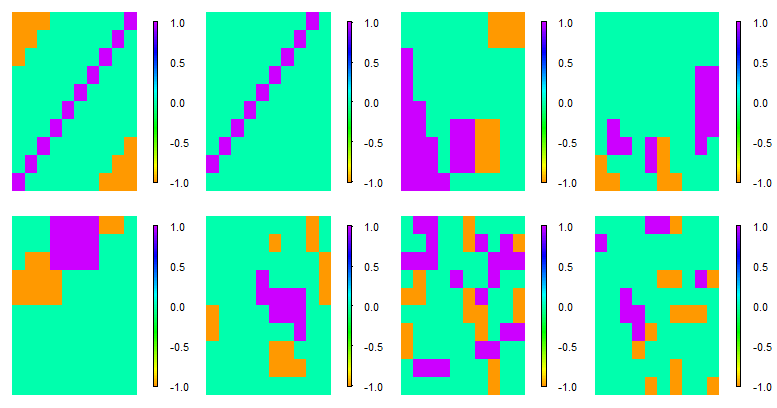}
\caption{The 8 orthogonal basis vectors used in the definition of the idealised function, with $\bassf_1$ shown in the top left, $\bassf_2$ to the right of this, etc.}
\label{toybasis}
\end{figure}
We model the discrepancy as
\begin{displaymath}
\disc \sim \mathrm{N}(\textbf{0}, \var_{\disc}),
\end{displaymath}
with the $(i, j)^{th}$ value of $\var_{\disc}$ given by
\begin{equation}
\var_{\disc}^{ij} = v_i v_j C(\textbf{s}_i , \textbf{s}_j )
\end{equation}
for variances $v_i, v_j$, and a correlation function $C(\cdot, \cdot)$ between locations $\textbf{s}_{i}$ and $\textbf{s}_{j}$. For $C(\cdot, \cdot)$, we again use the squared exponential correlation function, with the same correlation lengths as for $\var_{\err}$. We define $v_i$ via
\begin{displaymath}
v_{i} = \begin{cases}
0.1 \quad \text{if}\,\, i \in S \\
1 \quad \text{otherwise} \\
\end{cases}
\end{displaymath}
where the set $S$ contains the grid boxes on the main diagonal, as we are more interested in finding fields with output consistent with the observations in this region of the output. Figure \ref{truenroy} shows the true NROY space given $\var_{\err}$ and $\var_{\disc}$.
\begin{figure}[t]
\centering
\includegraphics[width = 0.75\textwidth]{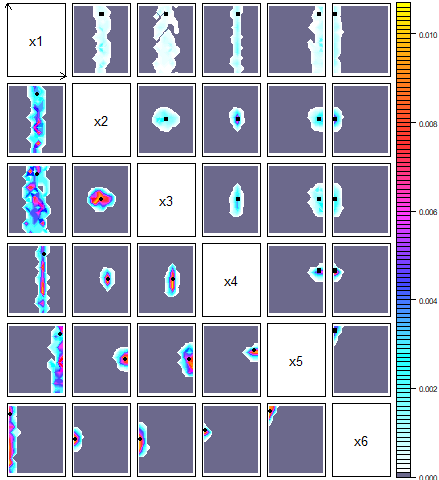}
\caption{Density plot of the true NROY space (i.e. no emulation involved), for each pair of parameters. For each cell in a particular pairwise plot, we average across the remaining 4 parameters, and plot the proportion of these runs that are in NROY space. The axes are reversed for the lower left plots, with the colour scale set individually for each plot. The black point corresponds to $\best$.}
\label{truenroy}
\end{figure}

\subsection{Probabilistic calibration for the SVD basis}

We performed probabilistic calibration following the Kennedy O'Hagan method, with fixed discrepancy as outlined in the previous section, uniform priors on the calibration parameters, and emulators as described in the main text.
\par
The dotted lines on Figure \ref{w1posteriors} show the resulting densities when  the truncated SVD basis, $\bas_4$, is used to calibrate probabilistically on the field, for the input parameters $x_1, \ldots, x_5$, and the ratio $r = x_5/(1.3+x_6)$. There are peaks of density away from the true values ($\best$, as shown by the red vertical lines), particularly for $x_3$ and $r$. For $x_4$, the parameter that controls the strength of the main diagonal, the posterior density is relatively flat across the entire range of $x_4$.
\begin{figure}[t]
\centering
\includegraphics[width = 0.75\textwidth]{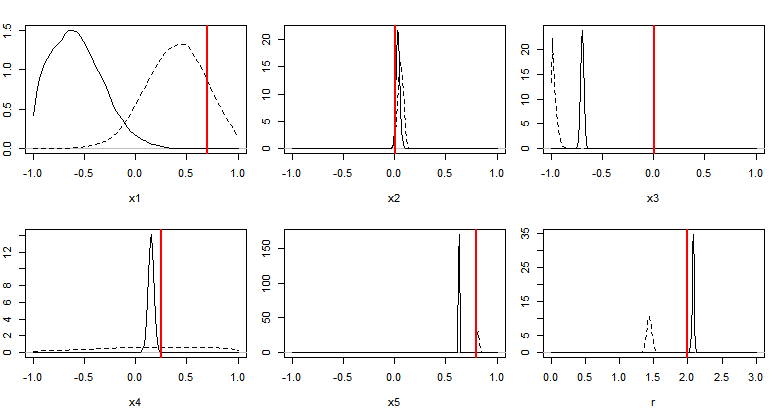}
\caption{The posterior distributions for $x_1, \ldots, x_5$ and $r$, when we calibrate probabilistically on the field with the SVD basis derived from the wave 1 ensemble, $\bas_4$ (dotted lines), and the wave 1 rotated basis (solid lines). The red vertical lines show the location of $\x^{*}$.}
\label{w1posteriors}
\end{figure}
\par
We sample from these posteriors and run the idealised function at these samples, to evaluate whether calibration with the truncated SVD basis has highlighted a region of parameter space that is `close' to $\obs$. 16 samples are shown in Figure \ref{svdcalibrationruns}, demonstrating that the results suggest it is not possible to remove the off-diagonal pattern that was dominant in the ensemble.
\begin{figure}[t]
\centering
\includegraphics[width = 0.7\textwidth]{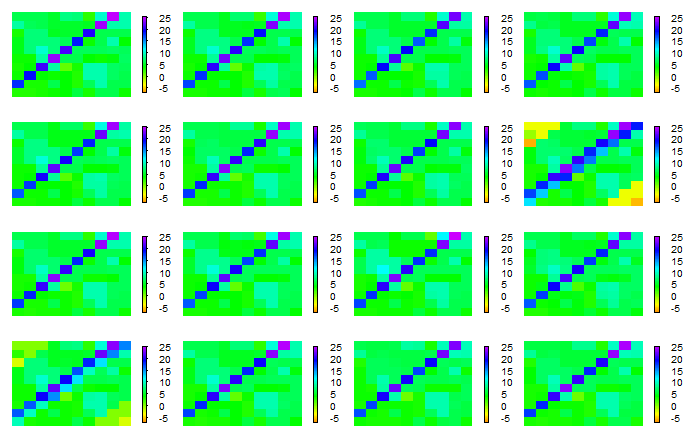}
\caption{$f(\x)$ at 16 samples of $\x$ from the calibration posterior distribution, when we emulate and calibrate with the truncated SVD basis $\bas_4$.}
\label{svdcalibrationruns}
\end{figure}

\subsection{Calibration with the rotated bases}

\begin{figure}[t]
\centering
\includegraphics[width = 0.75\textwidth]{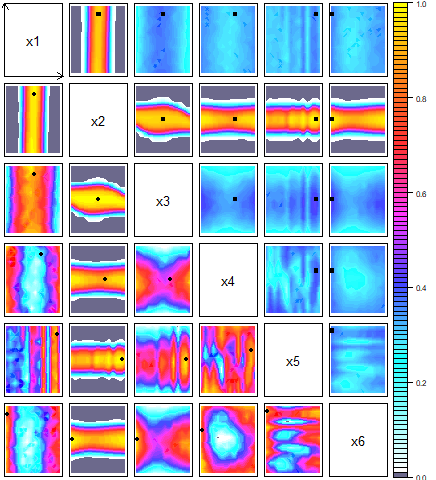}
\caption{Density plot for the wave 1 NROY space given by history matching using the rotated basis, for each pair of parameters. As in Figure S2, for each plot we average over the remaining parameters and plot the proportion in NROY space for each pair. The axes are reversed for the lower left plots, and the colour scales set differently. The black point corresponds to $\best$.}
\label{rotwave1nroy}
\end{figure}

The solid lines in Figure \ref{w1posteriors} show the posterior distributions for $x_1, \ldots, x_5$ and $r$ when the wave 1 rotated basis is used for probabilistic calibration, showing improvements (compared to the SVD basis) for $x_4$ and $r$.
\par
At wave 2, there are peaks of density at or near to the true parameter values for all but $x_5$ and $r$, as shown by the solid lines in Figure \ref{w23posteriors}. The dotted lines in this plot show the wave 3 posteriors. Although the peaks for $x_2$, $x_3$ and $x_4$ are not as large, this wave offers an improvement for $r$ (important for the strength of the main diagonal) and $x_1$ (as this parameter has no effect on the main or off-diagonal, a flat posterior is more accurate). The averaged posterior samples in Figure 5 show that the posterior distributions at each wave have identified better regions of parameter space than previously, with the wave 3 samples being consistent with the observations.

\begin{figure}[h!]
\centering
\includegraphics[width = 1\textwidth]{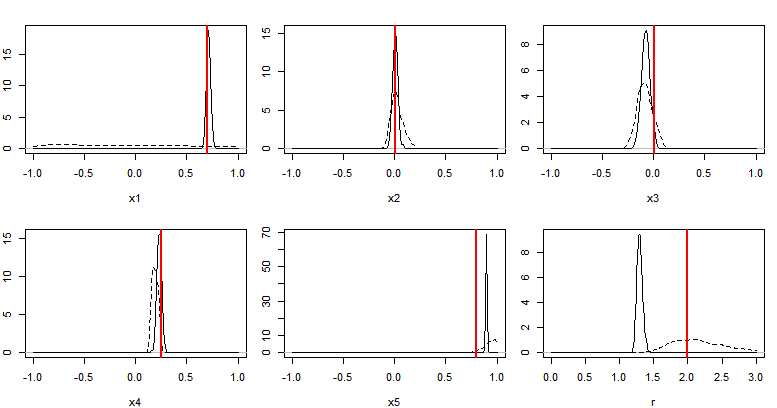}
\caption{The posterior distributions for $x_1, \ldots, x_5$ and $r$, at wave 2 (solid lines) and wave 3 (dotted lines), with the red lines equal to $\x^{*}$.}
\label{w23posteriors}
\end{figure}
\begin{figure}[t]
\centering
\includegraphics[width = 0.75\textwidth]{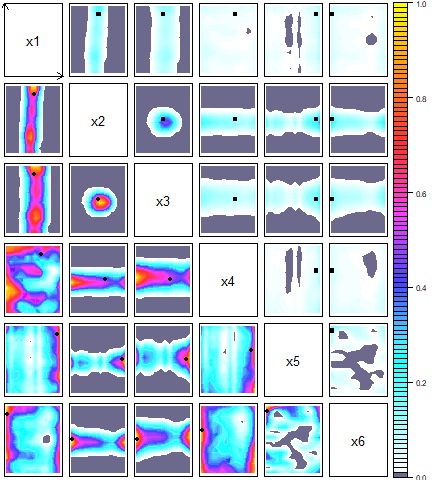}
\caption{Density plot for the wave 2 NROY space for each pair of parameters. Regions coloured grey indicate that there are no parameter settings in NROY space here, hence we see that we have significantly reduced space, compared to Figure S5.}
\label{rotwave2nroy}
\end{figure}
\begin{figure}[t]
\centering
\includegraphics[width = 0.75\textwidth]{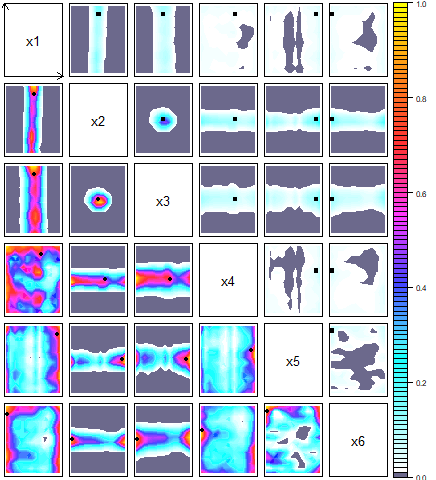}
\caption{Density plot for the wave 3 NROY space for each pair of parameters, with space again reduced from the previous wave (we have gone from 3.1\% to 2\% of the original parameter space).}
\label{rotwave3nroy}
\end{figure}
\par



\clearpage

\section{Proofs}

\subsection*{Weighted projection}

Define $\B = (\bb_1, \ldots, \bb_n)$, a basis with dimension $\ell \times n$, and $\weight$ an $\ell \times \ell$ positive definite matrix. The reconstruction $\rr = (r_1, \ldots, r_{\ell})^T$ of $\f = (f_1, \ldots, f_{\ell})^T$ has reconstruction error
\begin{equation} \label{supprecon}
\lVert \f - \rr \rVert_{\weight} = (\f - \rr)^T \weight^{-1} (\f - \rr)
\end{equation}
in the $\weight$ norm, where $\rr$ is given by coefficients $\cc = (c_1, \ldots, c_n)^T$ with
\begin{displaymath}
\rr = \sum_{k = 1}^n \bb_k c_k = \B \cc
\end{displaymath}
The vector $\cc$ that minimises the reconstruction error with respect to the $\weight$ norm is found by first writing \eqref{supprecon} in terms of $\cc$:
\begin{align*}
\begin{split}
(\f - \rr)^T \weight^{-1} (\f - \rr) &= (\f - \B \cc)^T \weight^{-1} (\f - \B \cc) \\
&= (\f^T - \cc^T \B^T) \weight^{-1} (\f - \B \cc) \\
&= \f^T \weight^{-1} \f - \cc^T \B^T \weight^{-1} \f - \f^T \weight^{-1} \B \cc + \cc^T \B^T \weight^{-1} \B \cc
\end{split}
\end{align*}
A scalar can be differentiated by a vector $\x$, with symmetric matrix $\textbf{A}$, via
\begin{align*}
\begin{split}
\frac{\partial}{\partial \x} \x^T \textbf{y} &= \frac{\partial}{\partial \x} \textbf{y}^T \x = \textbf{y}^T \\
\frac{\partial}{\partial \x} \x^T \textbf{A} \x &= 2 \x^T \textbf{A}
\end{split}
\end{align*}
Differentiating the reconstruction error with respect to $\cc$, with symmetric $\weight^{-1}$, we have
\begin{align*}
\begin{split}
\frac{\partial}{\partial \cc} \lVert \f - \rr \rVert_{\weight} &= 0 - (\B^T \weight^{-1} \f)^T - \f^T \weight^{-1} \B + 2 \cc^T \B^T \weight^{-1} \B \\
&= -\f^T \weight^{-1} \B - \f^T \weight^{-1} \B + \cc^T \B^T \weight^{-1} \B + \cc^T \B^T \weight^{-1} \B \\
&= -2\f^T \weight^{-1} \B + 2\cc^T \B^T \weight^{-1} \B
\end{split}
\end{align*}
Setting equal to zero, and solving for $\cc$, we have
\begin{align*}
\begin{split}
& 0 = -2\f^T \weight^{-1} \B + 2\hat{\cc}^T \B^T \weight^{-1} \B \\
&\implies \hat{\cc}^T \B^T \weight^{-1} \B = \f^T \weight^{-1} \B \\
&\implies \B^T \weight^{-1} \B \hat{\cc} = \B^T \weight^{-1} \f \\
&\implies \hat{\cc} = (\B^T \weight^{-1} \B)^{-1} \B^T \weight^{-1} \f
\end{split}
\end{align*}
With $\weight = \boldsymbol{\mathbb{I}}_{\ell}$, this is the usual projection equation:
\begin{displaymath}
\hat{\cc} = (\B^T \boldsymbol{\mathbb{I}}_{\ell}^{-1} \B)^{-1} \B^T \boldsymbol{\mathbb{I}}_{\ell}^{-1} \f = (\B^T \B)^{-1} \B^T \f
\end{displaymath}

\subsection*{Rotational invariance}

\textbf{Result 1} (Invariance of $\R_{\weight}(\cdot, \cdot)$ to rotation). \textit{For a rotation matrix $\rot$ of dimension $k \times k$, and a set of basis vectors $\B = (\bb_1, \ldots, \bb_n)$, we have}
	\begin{displaymath}
	\R_{\weight}(\B_k, \obs) = \R_{\weight}(\B_k \rot, \obs), \quad k = 1, \ldots, n.
	\end{displaymath}

\begin{proof}
We can rewrite the reconstruction error of the rotated basis, $\B_k \rot$, as
\begin{align*}
\begin{split}
\R_{\weight}(\B_k \rot, \obs) &= \lVert \obs - \B_k \rot ((\B_k \rot)^{T} \weight^{-1} \B_k \rot)^{-1} (\B_k \rot)^T \weight^{-1} \obs \rVert_{\weight} \\
&= \lVert \obs - \B_k \rot (\rot^T \B_k^T \weight^{-1} \B_k \rot)^{-1} \rot^T \B_k^T \weight^{-1} \obs \rVert_{\weight}
\end{split}
\end{align*}
Rotation matrix $\rot$ is invertible, with $\rot^T = \rot^{-1}$, by definition. We apply the identity $(\textbf{CD})^{-1} = \textbf{D}^{-1} \textbf{C}^{-1}$, where $\textbf{C}, \textbf{D}$ are $k \times k$ invertible matrices, for $\textbf{C} = \rot^T \B_k^T \weight^{-1} \B_k$ and $\textbf{D} = \rot$:
\begin{align*}
\begin{split}
&= \lVert \obs - \B_k \rot \rot^{-1} (\rot^T \B_k^T \weight^{-1} \B_k)^{-1} \rot^T \B_k^T \weight^{-1} \obs \rVert_{\weight} \\
&= \lVert \obs - \B_k (\B_k^T \weight^{-1} \B_k)^{-1} (\rot^T)^{-1} \rot^T \B_k^T \weight^{-1} \obs \rVert_{\weight} \\
&= \lVert \obs - \B_k (\B_k^T \weight^{-1} \B_k)^{-1} \B_k^T \weight^{-1} \obs \rVert_{\weight} \\
&= \R_{\weight}(\B_k, \obs)
\end{split}
\end{align*}
where in the second line, the inverse identity has been applied a second time with $\textbf{C} = \rot^T$ and $\textbf{D} = \B_k^T \weight^{-1} \B_k$, giving the final result.
\end{proof}

\subsection{Proof of Theorem 1}

Before proving Theorem 1, we first prove the following results:
\begin{result}[Orthogonality of the residual basis] \label{resultorth}
	The residual basis, $\B_{\epsilon}$, calculated from $\ensc$ and $\B_p$, is orthogonal to $\B_p$ (with respect to the $\weight$ norm).
\end{result}
\begin{proof}
First, we show the orthogonality (in $\weight$) of the columns of the residual ensemble, $\ens_{\resid}$, and the columns of $\B_p$:
\begin{align} \label{orthogresult}
\begin{split}
\B_p^T \weight^{-1} \ens_{\resid} &= \B_p^T \weight^{-1} (\ensc - \B_p (\B_p^{T} \weight^{-1} \B_p)^{-1} \B_p^{T} \weight^{-1} \ensc) \\
&= \B_p^T \weight^{-1} \ensc - (\B_p^T \weight^{-1} \B_p)(\B_p^{T} \weight^{-1} \B_p)^{-1} \B_p^{T} \weight^{-1} \ensc \\
&= \B_p^T \weight^{-1} \ensc - \B_p^{T} \weight^{-1} \ensc \\
&= \textbf{0}
\end{split}
\end{align}
This zero matrix has dimension $p \times n$, i.e. the basis vectors in $\B_p$ are orthogonal with the vectors of $\ens_{\resid}$, with respect to the $\weight$ norm. Using this, we obtain the result by considering the (generalised) singular value decomposition of $\ens_{\resid}^T$:
\begin{align*}
\begin{split}
\ens_{\resid}^T &= \textbf{U} \boldsymbol\Sigma \textbf{V}^{T} \\
\implies \ens_{\resid}^T &= \textbf{U} \boldsymbol\Sigma \B_{\resid}^{T}
\end{split}
\end{align*}
where $\textbf{U}$ is an orthonormal $n \times n$ matrix, $\boldsymbol\Sigma$ is a diagonal $n \times n$ matrix, and $\textbf{V} = \B_{\resid}$ is an $\ell \times n$ matrix with $\B_{\resid}^T \weight^{-1} \B_{\resid} = \boldsymbol{\mathbb{I}}_{n}$. From here, we have that
\begin{align*}
\begin{split}
\implies \ens_{\resid} &= \B_{\resid} \boldsymbol\Sigma^{T} \textbf{U}^{T} \\
\implies \ens_{\resid} \textbf{U} &= \B_{\resid} \boldsymbol\Sigma^{T} \textbf{U}^{T} \textbf{U} \\
\implies \B_p^T \weight^{-1} \ens_{\resid} \textbf{U} &= \B_p^T \weight^{-1} \B_{\resid} \boldsymbol\Sigma^{T} \\
\end{split}
\end{align*}
where we have multiplied on the left by $\B_p^T \weight^{-1}$. From \eqref{orthogresult}, we have $\B_p^T \weight^{-1} \ens_{\resid} = \textbf{0}$, hence 
\begin{displaymath}
\implies \B_p^T \weight^{-1} \B_{\resid} \boldsymbol\Sigma^{T} = \textbf{0}
\end{displaymath}
The final $p+1$ eigenvalues on the diagonal of $\boldsymbol\Sigma$ are zero (the $p$ vectors in $\B_p$, and the ensemble mean, remove $p+1$ degrees of freedom). Therefore, we are only interested in the leading $n - p - 1$ columns of $\B_{\resid}$, as all of the variability in $\ensc$ has already been explained. By discarding the columns associated with zero eigenvalues, we have that
\begin{align*}
\implies \B_p^T \weight^{-1} [\B_{\resid}]_{n-p-1} = \textbf{0}
\end{align*}
because $\boldsymbol\Sigma$ is diagonal, and hence we have the result.
\end{proof}
From Result \eqref{resultorth}, we can show that the basis vector selected at step $k$ of the optimal rotation algorithm, $\bass_k^* = \resbas^{k-1} \rott_k$, is orthogonal to those previously selected, $\bas^*_{k - 1} = (\bass_1^*, \ldots, \bass_{k - 1}^*)$:
\begin{displaymath}
(\bas^*_{k - 1})^T \weight^{-1} \bass_k^* = (\bas^*_{k - 1})^T \weight^{-1} \resbas^{k-1} \rott_k = \textbf{0}.
\end{displaymath}
\begin{result} \label{residbas1} When $\B$ is a basis for $\ensc$, we can write $\B_{\epsilon}^k = \B \rot_{\epsilon}^k$ for square $\rot_{\epsilon}^k$, i.e. the residual basis at iteration $k$ of the algorithm contains linear combinations of the vectors of $\B$, and hence each vector selected by the algorithm is a linear combination of $\B$.
\end{result}
\begin{proof}
	By the singular value decomposition of the residual ensemble after selecting $k$ basis vectors, we have
	\begin{displaymath}
	(\resens^k)^T = \textbf{U}_{\epsilon}^k \boldsymbol{\Sigma}_{\epsilon}^k (\resbas^k)^T 
	\end{displaymath}
	for orthonormal $\textbf{U}_{\epsilon}^k$, and diagonal $\boldsymbol{\Sigma}_{\epsilon}^k$. We can write $\ensc = \B \rot_{\boldsymbol{\mu}}$ ($\B$ is a basis for $\ensc$, hence the ensemble is a linear combination of the basis vectors), for $n \times n$ matrix $\rot_{\boldsymbol{\mu}}$. At iteration $k$ of the optimal rotation algorithm, we have
	\begin{align*}
	\begin{split}
	\resbas^k \boldsymbol{\Sigma}_{\epsilon}^k (\textbf{U}_{\epsilon}^k)^T &= \ensc - \bas^*_k ((\bas^*_k)^T \weight^{-1} \bas^*_k)^{-1} (\bas^*_k)^{T} \weight^{-1} \ensc \\
	\implies \resbas^k &= (\B \rot_{\boldsymbol{\mu}} - \bas^*_k ((\bas^*_k)^T \weight^{-1} \bas^*_k)^{-1} (\bas^*_k)^{T} \weight^{-1} \B \rot_{\boldsymbol{\mu}}) \textbf{U}_{\epsilon}^k (\boldsymbol{\Sigma}_{\epsilon}^k)^{-1} \\
	\end{split}
	\end{align*} 
	Set $k = 1$, i.e. we have only selected one basis vector so far. When $k = 1$, the algorithm selects a linear combination of $\B$ by construction, hence we have $\bass^*_1 = \bas^*_1 = \B \tilde{\rott}_1$ for vector $\tilde{\rott}_1$. Therefore,
	\begin{align} \label{resbasproof}
	\begin{split}
	\resbas^1 &= (\B \rot_{\boldsymbol{\mu}} - \B \tilde{\rott}_1 ((\B \tilde{\rott}_1)^T \weight^{-1} \B \tilde{\rott}_1)^{-1} (\B \tilde{\rott}_1)^{T} \weight^{-1} \B \rot_{\boldsymbol{\mu}}) \textbf{U}_{\epsilon}^1 (\boldsymbol{\Sigma}_{\epsilon}^1)^{-1} \\
	&= \B (\rot_{\boldsymbol{\mu}} - \tilde{\rott}_1 ((\B \tilde{\rott}_1)^T \weight^{-1} \B \tilde{\rott}_1)^{-1} (\B \tilde{\rott}_1)^{T} \weight^{-1} \B \rot_{\boldsymbol{\mu}})  \textbf{U}_{\epsilon}^1 (\boldsymbol{\Sigma}_{\epsilon}^1)^{-1} \\
	&= \B \rot_{\epsilon}^1
	\end{split}
	\end{align}
	with $\rot_{\epsilon}^1 = \rot_{\boldsymbol{\mu}} - \tilde{\rott}_1 ((\B \tilde{\rott}_1)^T \weight^{-1} \B \tilde{\rott}_1)^{-1} (\B \tilde{\rott}_1)^{T} \weight^{-1} \B \rot_{\boldsymbol{\mu}})  \textbf{U}_{\epsilon}^1 (\boldsymbol{\Sigma}_{\epsilon}^1)^{-1}$.
	Then at iteration $k = 2$, we optimise over linear combinations of $\resbas^1$, so that
	\begin{displaymath}
	\bass_2^* = \resbas^1 \rott_2 = \B \rot_{\epsilon}^1 \rott_2 = \B \tilde{\rott_2}
	\end{displaymath}
	i.e. the second basis vector is a linear combination of $\B$. It follows that at iteration $k$, we select a new basis vector where
	\begin{displaymath}
	\bass_k^* = \resbas^{k-1} \rott_k = \B \rot_{\epsilon}^{k-1} \rott_k = \B \tilde{\rott_k}
	\end{displaymath}
	for $\rot_{\epsilon}^{k-1} = \rot_{\boldsymbol{\mu}} - \tilde{\rot}_{k-1} ((\B \tilde{\rot}_{k-1})^T \weight^{-1} \B \tilde{\rot}_{k-1})^{-1} (\B \tilde{\rot}_{k-1})^{T} \weight^{-1} \B \rot_{\boldsymbol{\mu}})  \textbf{U}_{\epsilon}^1 (\boldsymbol{\Sigma}_{\epsilon}^1)^{-1}$, and $\tilde{\rot}_{k-1} = (\tilde{\rott}_1, \ldots, \tilde{\rott}_{k-1})$ (so that $\bas^*_{k-1} = \B \tilde{\rot}_{k-1}$). Therefore, we have that the residual basis at each iteration is a linear combination of the original basis, $\B$, and hence each new basis vector is.
\end{proof}

\noindent
\textbf{Theorem 1.} \textit{$\bas^*$ in step 3 of the optimal rotation algorithm is an orthogonal rotation of $\B$.}
\begin{proof}
	Assume that we have performed $k$ iterations of the algorithm, resulting in the basis
	\begin{equation} \label{newbasis}
	\bas^* = (\bass_1^*, \ldots, \bass_k^*, [\B_{\epsilon}^k]_{n-k})
	\end{equation}
	With $\B_{\epsilon}^k = \B \rot_{\epsilon}^k$ and $\bass_j^* = \B \tilde{\rott_j}$ (Result \ref{residbas1}), we rewrite \eqref{newbasis} as
	\begin{displaymath}
	\bas^* = (\B \tilde{\rott}_1, \ldots, \B \tilde{\rott}_k, [\B \rot_{\epsilon}^k]_{n-k})
	= \B (\tilde{\rott}_1, \ldots, \tilde{\rott}_k, [\rot_{\epsilon}^k]_{n-k}) = \B\rot.
	\end{displaymath}
	We show that $\rot^T \rot = \boldsymbol{\mathbb{I}}_n$, i.e. $\rot$ is a rotation matrix. We have
	\begin{equation} \label{rotproof}
	\rot^T \rot = \begin{pmatrix}
	\tilde{\rott}_1^T \tilde{\rott}_1 & \tilde{\rott}_1^T \tilde{\rott}_2 & \ldots & \tilde{\rott}_1^T [\rot_{\epsilon}^k]_{n-k} \\
	\tilde{\rott}_2^T \tilde{\rott}_1 & \tilde{\rott}_2^T \tilde{\rott}_2 & \ldots & \tilde{\rott}_2^T [\rot_{\epsilon}^k]_{n-k} \\
	\vdots & & \ddots & \\
	[\rot_{\epsilon}^k]_{n-k}^T \tilde{\rott}_1 & [\rot_{\epsilon}^k]_{n-k}^T \tilde{\rott}_2 & \ldots & [\rot_{\epsilon}^k]_{n-k}^T [\rot_{\epsilon}^k]_{n-k} \\
	\end{pmatrix}
	\end{equation}
	The upper-left $k \times k$ block can be written as 
	\begin{displaymath}
	\tilde{\rott}_i^T \tilde{\rott}_j = \tilde{\rott}_i^T \B^T \weight^{-1} \B \tilde{\rott}_j = (\bass_i^*)^T \weight^{-1} \bass_j^* = \begin{cases}
	1 \quad \text{if} \quad i = j \\
	0 \quad \quad \text{otherwise.}
	\end{cases}
	\end{displaymath}
	Similarly,
	\begin{displaymath}
	[\rot_{\epsilon}^k]_{n-k}^T \tilde{\rott}_j = (\B [\rot_{\epsilon}^k]_{n-k})^T \weight^{-1} \bass_j^* = \B_{\epsilon}^T \weight^{-1} \bass_j^* = \textbf{0},
	\end{displaymath}
	by Result \ref{resultorth}. Finally,
	\begin{displaymath}
	[\rot_{\epsilon}^k]_{n-k}^T [\rot_{\epsilon}^k]_{n-k} = [\rot_{\epsilon}^k]_{n-k}^T \B^T \weight^{-1} \B [\rot_{\epsilon}^k]_{n-k} = [\B_{\epsilon}^k]_{n - k}^T \weight^{-1} [\B_{\epsilon}^k]_{n - k} = \boldsymbol{\mathbb{I}}_{n - k}
	\end{displaymath}
	and hence from \eqref{rotproof} we have $\rot^T \rot = \boldsymbol{\mathbb{I}}_{n}$, and $\rot$ is a rotation matrix.
\end{proof}

\subsection{Gram-Schmidt invariance}

Gram-Schmidt orthonormalisation imposes orthonormality on basis vectors $\B = (\bb_1, \ldots, \bb_n)$ \citep{bjorck1967solving}. It can be written in terms of matrices \citep{bjorck1994numerics}:
\begin{displaymath}
\B = \bas \textbf{R}
\end{displaymath}
where $\bas$ is an $l \times n$ basis containing normalised, orthogonal vectors $\bass_1, \ldots, \bass_n$, and $\textbf{R}$ is an $n \times n$ upper-triangular matrix. Therefore, the $j^{th}$ new basis vector is a linear combination of the first $j$ basis vectors of $\B$.
\begin{result}[Gram-Schmidt invariance] \label{resultgs}
The reconstruction given by the first $q$ vectors of the original basis is equal to the reconstruction given by the first $q$ vectors of the orthogonal basis:
\begin{displaymath}
\R_{\weight}(\B_q, \obs) = \R_{\weight}(\bas_q, \obs), \quad q = 1, \ldots, n
\end{displaymath}
\end{result}
\begin{proof}
Using $\bas^T \weight^{-1} \bas = \boldsymbol{\mathbb{I}}_{n}$ (i.e. we've imposed orthonormality with respect to the $\weight$ norm), we have
\begin{align*} 
\begin{split}
\B (\B^T \weight^{-1} \B)^{-1} \B^T \weight^{-1} \obs &= \bas \textbf{R} ((\bas \textbf{R})^T \weight^{-1} \bas \textbf{R})^{-1} (\bas \textbf{R})^T \weight^{-1} \obs \\
&= \bas \textbf{R} (\textbf{R}^T \bas^T \weight^{-1} \bas \textbf{R})^{-1} \textbf{R}^T \bas^T \weight^{-1} \obs \\
&= \bas \textbf{R} (\textbf{R}^T\textbf{R})^{-1} \textbf{R}^T \bas^T \weight^{-1} \obs \\
\end{split}
\end{align*}
As $\textbf{R}$ is invertible, and applying the identity $(\textbf{C} \textbf{D})^{-1} = \textbf{D}^{-1} \textbf{C}^{-1}$ for square matrices $\textbf{C}$ and $\textbf{D}$, we have
\begin{align*}
\begin{split}
&=\bas \textbf{R} \textbf{R}^{-1} \textbf{R}^{-T}  \textbf{R}^T \bas^T \weight^{-1} \obs \\
&= \bas \bas^T \weight^{-1} \obs \\
&= \bas (\bas^T \weight^{-1} \bas)^{-1} \bas^T \obs
\end{split}
\end{align*}
i.e. the reconstruction of $\obs$ is the same with both $\B$ and $\bas$. The proof proceeds analogously for any truncation of these bases.
\end{proof}
Alternatively, this result could be shown by proving that both $\B$ and $\bas$ span the same subspace, and using that a basis gives unique representations of general fields \citep{kuttler2012linear}.


\clearpage
\section{CanAM4}

\begin{figure}[h]
\centering
\includegraphics[width = 0.7\linewidth]{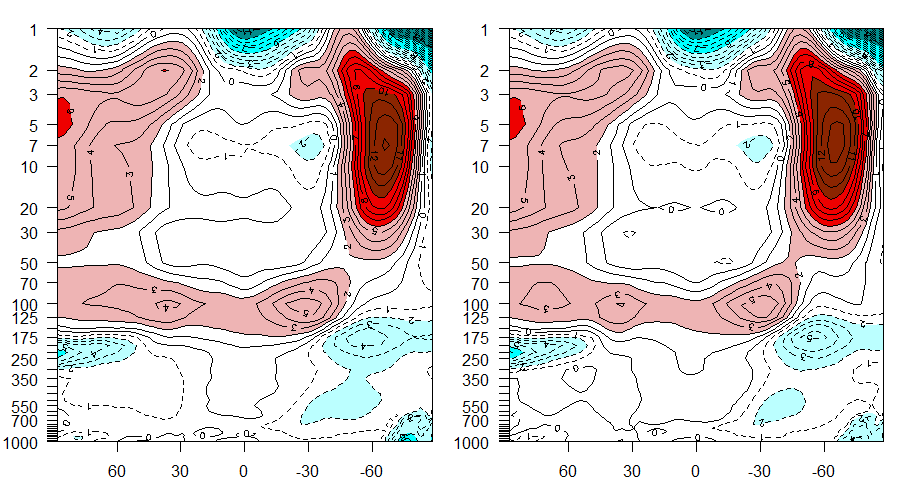}
\caption{The TA anomaly for the standard run (left), and for run 005 of the new ensemble, the `best' run in the wave 2 ensemble, in terms of minimising the root mean squared error. The large warm bias from the standard run has not been removed.}
\label{TAbest}
\end{figure}

\begin{figure}[h!]
	\centering
	\includegraphics[width = 0.4\linewidth]{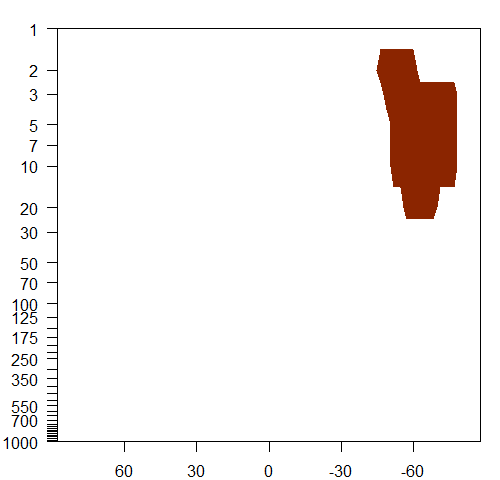}
	\caption{The grid boxes where we deem there to be a potential structural error for TA, and hence where we increase the discrepancy, as described in Section 5.}
	\label{tadiscrepancy}
\end{figure}

\end{document}